\newcommand{\ACSF}{ACSF\xspace}%
\newcommand{\CSF}{CSF\xspace}%
\newtheorem{theorem}{Theorem}
\newaliascnt{lemma}{theorem} \newtheorem{lemma}[lemma]{Lemma}
\newaliascnt{corollary}{theorem}
\newtheorem{corollary}[corollary]{Corollary}
\newaliascnt{observation}{theorem}
\newtheorem{observation}[observation]{Observation}
\newaliascnt{conjecture}{theorem}
\newtheorem{conjecture}[conjecture]{Conjecture}
\newaliascnt{claim}{theorem}
\theoremstyle{remark}
\newaliascnt{remark}{theorem}
\newcommand{\N}{\mathbb N}
\newcommand{\Z}{\mathbb Z}
\newcommand{\R}{\mathbb R}
\DeclareMathOperator{\CH}{\mathcal{CH}}
\newcommand{\xmax}{x_{\mathrm{max}}}
\newcommand{\ymax}{y_{\mathrm{max}}}
\newcommand{\eps}{\varepsilon}
\newcommand{\Lidicky}{Lidick\'y\xspace}
\providecommand{\Barany}{B{\'a}r{\'a}ny\xspace}
\providecommand{\Jarnik}{Jarn\'ik\xspace}
\providecommand{\Mobius}{M\"obius\xspace}
\newcommand{\si}[1]{#1}
\newcommand{\etal}{\textit{et~al.}\xspace}%
\renewcommand{\th}{th\xspace}%
\title{Grid peeling and the affine curve-shortening flow\footnote{A preliminary version of this paper appeared in \emph{Proceedings of the Twentieth Workshop on Algorithm Engineering and Experiments (ALENEX 2018)}, pp. 109--116, SIAM, 2018.}}
\author{%
   David Eppstein%
   \thanks{Department of Computer Science, University of California,
      Irvine. Supported in part by the National Science Foundation
      under Grants CCF-1228639, CCF-1618301, and CCF-1616248.}%
   \and%
   Sariel Har-Peled%
   \thanks{Department of Computer Science; University of Illinois; 201
      N. Goodwin Avenue; Urbana, IL, 61801, USA; {\tt
         sariel@illinois.edu}; {\tt \url{http://sarielhp.org/}.} Work
      on this paper was partially supported by a NSF AF awards
      CCF-1421231 and CCF-1217462.}%
   \and%
   Gabriel Nivasch%
   \thanks{Corresponding author. Department of Computer Science, Ariel University, Ariel,
      Israel. {\tt gabrieln@ariel.ac.il}.}}%
\date{}
\begin{document}
\maketitle

\begin{abstract}
    In this paper we study an experimentally-observed connection
    between two seemingly unrelated processes, one from computational
    geometry and the other from differential geometry. The first one
    (which we call \emph{grid peeling}) is the convex-layer
    decomposition of subsets $G\subset \Z^2$ of the integer grid,
    previously studied for the particular case $G=\{1,\ldots,m\}^2$ by
    Har-Peled and \Lidicky (2013). The second one is the
    affine curve-shortening flow (\ACSF), first studied by Alvarez
    \etal (1993) and Sapiro and Tannenbaum (1993). We
    present empirical evidence that, in a certain well-defined sense,
    grid peeling behaves at the limit like \ACSF on convex curves. We offer some
    theoretical arguments in favor of this conjecture.

    We also pay closer attention to the simple case where $G=\N^2$ is
    a quarter-infinite grid. This case corresponds to \ACSF starting
    with an infinite L-shaped curve, which when transformed using the
    \ACSF becomes a hyperbola for all times $t>0$. We prove that, in
    the grid peeling of $\N^2$, (1) the number of grid points removed
    up to iteration $n$ is $\Theta(n^{3/2}\log n)$; and (2) the
    boundary at iteration $n$ is sandwiched between two hyperbolas
    that are separated from each other by a constant factor.
\end{abstract}

\section{Introduction}

Let $G$ be a planar point set. The \emph{convex-layer decomposition}
(or \emph{onion decomposition}) of
$G$~\cite{b-omd-76,c-clps-85,d-co-04,e-chp-82,hl-pg-13} is a discrete
algorithmic process in which points of $G$ are iteratively removed, as
follows: Let $G_0=G$. Then, for each $n\ge 1$ such that
$G_{n-1}\neq \emptyset$, let $H_n = \CH{(G_{n-1})}$ (the convex hull
of the current set), let $L_n$ be the set of vertices of $H_n$, and
remove $L_n$ from the current set by setting
$G_n = G_{n-1} \setminus L_n$.\footnote{Note that $G_{n-1}$ might contain points which lie on the boundary of $H_n$ but are not vertices. These points will still be present in $G_n$.} We call $H_n$ the \emph{$n$\th convex
   layer} of $G$.  This decomposition has applications in
range-searching data structures~\cite{cgl-pgd-85} and as a measure of
depth in robust statistics~\cite{b-omd-76,e-chp-82}.

Motivated by the question of whether grid points behave similarly to
random points, Har-Peled and \Lidicky \cite{hl-pg-13} studied the
convex-layer decomposition of the $m\times m$ integer grid
$G=\{1,\ldots,m\}^2$. They proved that this point set has
$\Theta(m^{4/3})$ convex layers. They also briefly noted that the
convex layers of this point set appear to converge to circles as the
process advances.

In this paper we explore an experimentally-observed connection between
the convex-layer decomposition of more general subsets $G\subset \Z^2$
of the integer grid (which we call \emph{grid peeling}), and a
continuous process on smooth curves known as the \emph{affine
   curve-shortening flow} (\ACSF). Our conjectural connection between
these two processes, if true, would show that in the square case
studied by Har-Peled and \Lidicky, peeling indeed converges to a
circular shape. More generally, it would show that for any convex
shape, in the limit as the grid density becomes arbitrarily fine, the
result of peeling the intersection of that shape with a grid converges
to an ellipse.

\subsection{The affine curve-shortening flow}

In the affine curve-shortening flow, a smooth curve
$\gamma\subset \R^2$ varies with time in the following way. At each
moment in time, each point of $\gamma$ moves perpendicularly to the
curve, towards its local center of curvature, with instantaneous
velocity $r^{-1/3}$, where $r$ is that point's radius of curvature at
that time. Thus, for a smooth convex curve, all points move inwards,
possibly at different velocities. For non-convex curves, points of
local non-convexity move outwards. See \autoref{fig:A:C:S:F}.

\begin{figure}
    \centerline{\includegraphics[width=6cm]{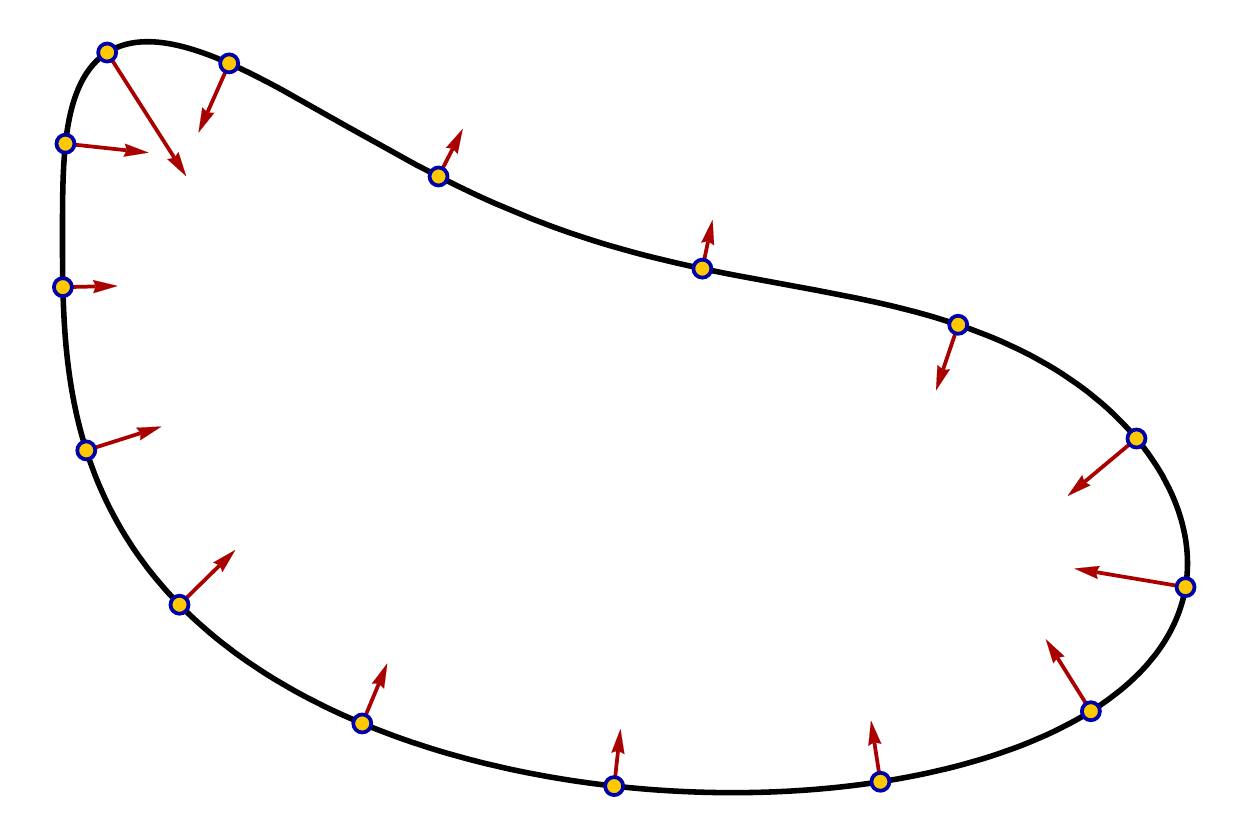}}
    \caption{%
       Affine curve-shortening flow. The arrows
       indicate the instantaneous velocity of different points along
       the curve at the shown time moment.}
    \label{fig:A:C:S:F}%
\end{figure}

The \ACSF was first studied by Alvarez \etal \cite{aglm-afeip-93} and
Sapiro and Tannenbaum~\cite{st-aiss-93}.  It differs from the more
usual \emph{curve-shortening flow} (\CSF)~\cite{c-gceip-03,cz-csp-01},
in which each point moves with instantaneous velocity $r^{-1}$.
Unlike the \CSF, the \ACSF is invariant under affine transformations:
Applying an affine transformation to a curve, and then performing the
\ACSF, gives the same results (after rescaling the time parameter
appropriately) as performing the \ACSF and then applying the affine
transformation to the shortened curves. Moreover, if the affine
transformation preserves area, then the time scale is unaffected.  For
more on the \ACSF see~\cite{c-gceip-03,c-cccas-15,i-cccas-16} and
references cited there.

For the \CSF, every smooth Jordan curve eventually becomes convex and
then converges to a circle as it collapses to a point, without ever
crossing itself.  Angenent \etal~\cite{ast-ahenc-98} proved that,
correspondingly, under the \ACSF, every smooth Jordan curve becomes
convex and then converges to an ellipse as it collapses to a point,
without self-crossings.

Even if the initial curve $\gamma$ is not smooth (e.g.~it has sharp corners), as long as it satisfies certain natural conditions, there exists a unique time-dependent curve $\gamma(t)$ which satisfies the ACSF (or the CSF) condition for all $t>0$, and which converges to $\gamma$ as $t\to 0^+$. See \cite{c-gceip-03}, Theorems 3.26 and 3.28.

The \ACSF was originally applied in computer vision, as a way of
smoothing object boundaries~\cite{c-gceip-03} and of computing shape
descriptors that are insensitive to the distortions caused by changes
of viewpoint. Because peeling can be computed quickly and efficiently,
by a purely combinatorial algorithm~\cite{c-clps-85}, our conjectural
connection between peeling and the \ACSF could potentially provide an
efficient way of performing these computations.  However, to fully
realize this potential application, it would be helpful to prove
rigorous bounds on the accuracy of approximation, and to find a way to
generalize the approximation so that it can handle non-convex curves
as well. In the other direction, our conjecture would allow us to
apply results on the well-understood behavior of the \ACSF to the less
well-understood algorithmic process of grid peeling. For instance, it
would explain the circular layer shapes observed by Har-Peled and
\Lidicky.

\subsection{Organization of this paper}

This paper is organized as follows.  In \autoref{the_connection} we
formalize our conjectured connection between peeling and the \ACSF as
\autoref{conj:A:C:S:F}, and provide a non-rigorous justification for
the conjecture.  In \autoref{app} we describe our implementation
details, and report on more detailed experiments that quantify the
similarity between peeling and the \ACSF.  In \autoref{sec_bounded_reg}
we prove \autoref{theorem_4over3}, which shows that for bounded regions, the rates of
peeling and the \ACSF are within a constant factor of each other, a
weaker form of our conjecture.  In \autoref{sec_peel_corner} we
examine more closely a special case of our conjecture on a
quarter-infinite grid, and prove more precise results for that case.

\section{The connection}
\label{the_connection}

Empirical evidence points to a connection between grid peeling and the
\ACSF.  For a curve $\gamma$, let $\gamma(t)$, $t\ge 0$, be the result
of applying \ACSF on $\gamma$ for time duration $t$. Given a positive
integer $n$, let $(\Z/n)^2$ be the uniform grid with spacing
$1/n$. Given a convex region $R\subset\R^2$, let
$G_{[n]}(R) = R\cap (\Z/n)^2$ be the set of grid points of $(\Z/n)^2$
contained in $R$. Informally, for a convex curve $\gamma$, we have
that peeling $G_{[n]}(\CH(\gamma))$ appears to approximate the \ACSF
on $\gamma$ as $n\to\infty$.

This connection is illustrated in
\autoref{fig_round_tr}. \autoref{fig_round_tr} (left) shows the \ACSF
evolution of a sample convex curve $\gamma$, given by
$\gamma=\{(x(a), y(a)) : 0\le a < 2\pi\}$ for
$x(a) = ((1-\sin{a})/2)^2$ and $y(a) =
((1-\sin{(a+2)})/2)^{1.3}$. Specifically, the figure shows
$\gamma(0.02t)$ for $t=0,1,2, \ldots,14$.  \autoref{fig_round_tr}
(center) shows every fifth layer of the convex-layer decomposition of
$G_{[30]}(R)$ for $R=\CH(\gamma)$. The similarity to
\autoref{fig_round_tr} (left) is immediately evident. Finally,
\autoref{fig_round_tr} (right) shows every $2714$\th layer of the
convex-layer decomposition of $G_{[5000]}(R)$. \autoref{fig_round_tr}
(left) and (right) are virtually indistinguishable to the naked
eye.

\begin{figure}
    \centerline{\includegraphics{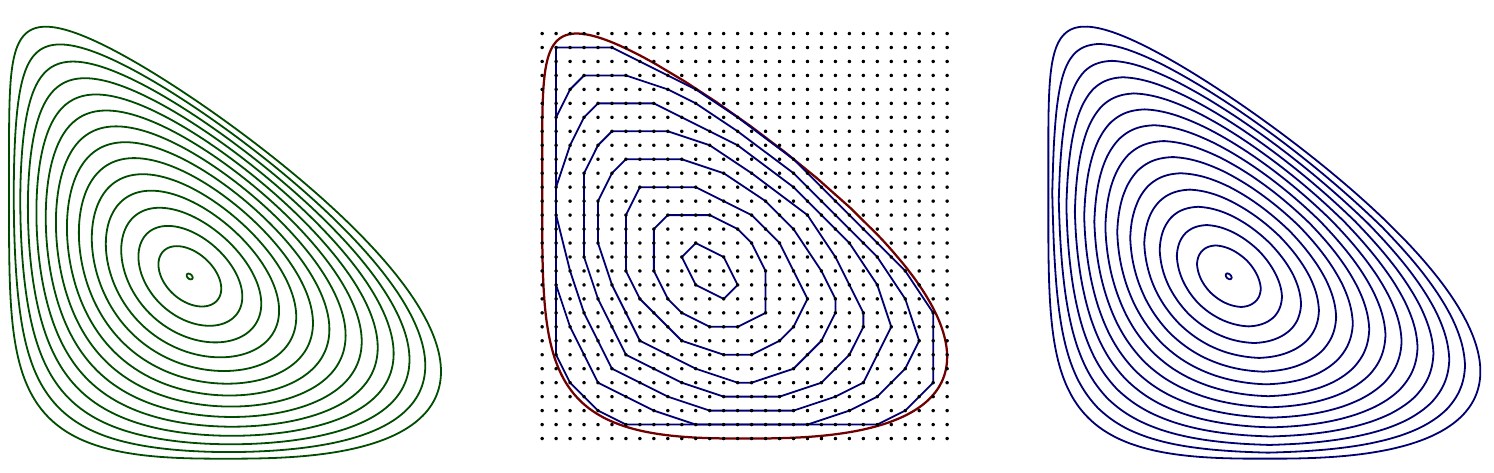}}
    \caption{\label{fig_round_tr}Left: \ACSF evolution of a convex
       curve. Center and right: Convex-layer decomposition of the set
       of grid points inside the same convex curve, for different grid
       spacings.}
\end{figure}

We can formalize this resemblance by the following conjecture.

\begin{conjecture}%
    \label{conj:A:C:S:F}%
    There exists a constant $c\approx 1.6$ such that the following is
    true: Let $R\subset \R^2$ be a convex region, and let
    $\gamma=\partial R$ be its boundary. Let $t^*$ be the time it takes for $\gamma$
    to collapse to a point under the \ACSF (or $t^*=\infty$ for
    unbounded sets that never collapse). Fix a time $0\le t<t^*$, and
    let $\gamma' = \gamma(t)$ under the \ACSF. For a fixed $n$, let
    $G'$ be the $m$\th convex layer of $G_{[n]}(R)$ for
\begin{equation}\label{eq_conj}
m = c t n^{4/3}.
\end{equation}
Then, as $n\to\infty$, the boundary of the
    convex hull of $G'$ converges pointwise to $\gamma'$.
\end{conjecture}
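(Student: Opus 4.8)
The plan is to recast the statement as the convergence of a discrete scheme to a viscosity solution of the level-set formulation of the \ACSF, and to verify the three hypotheses of the Barles--Souganidis convergence framework. Encode peeling by its \emph{depth function}: for each point $x \in R\cap(\Z/n)^2$, let $d_n(x)$ be the index of the convex layer on which $x$ is removed, and set $u_n(x) = d_n(x)/(c\,n^{4/3})$. The conjecture is equivalent to the claim that $u_n$ converges locally uniformly to the function $u$ whose sublevel set $\{u\le t\}$ is the region bounded by $\gamma(t)$; this $u$ is characterized as the unique viscosity solution of the affine level-set equation $u_t = \kappa^{1/3}\,|\nabla u|$, where $\kappa$ is the curvature of the level set of $u$, with the convention that each level set moves inward with normal speed $\kappa^{1/3}=r^{-1/3}$. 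Existence and uniqueness of this solution, even when $\gamma$ has corners (as in the unbounded $\N^2$ case), is exactly the content of the continuous theory cited above (Theorems 3.26 and 3.28 of \cite{c-gceip-03}), so the entire burden falls on proving convergence of the scheme.

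First I would verify \emph{monotonicity} and \emph{stability}, the easy hypotheses. Grid peeling is monotone under inclusion: if $G\subseteq G'$ then each point of $G$ is removed no later in the peeling of $G'$ than in that of $G$, since each $\CH$ of $G$ sits inside the corresponding $\CH$ of $G'$; this is the geometric face of the operator monotonicity required by Barles--Souganidis. Stability, i.e.\ uniform boundedness and non-degeneracy of $u_n$ on compact time intervals, follows from \autoref{theorem_4over3}, which already pins the peeling rate to within a constant factor and so prevents the depth function from collapsing or blowing up.

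The crux is \emph{consistency}: near a smooth boundary point $p$ of radius of curvature $r$, one peeling step must displace the local boundary inward, in the Euclidean normal direction, by $(1+o(1))\,c^{-1} r^{-1/3} n^{-4/3}$, so that a mesoscopic batch of $k$ layers with $1\ll k\ll n^{4/3}$ reproduces the \ACSF velocity $r^{-1/3}$. I would prove this by an osculating-parabola reduction: after an area-preserving affine map, normalize $\gamma$ near $p$ to $y=x^2$ and the grid to a sheared lattice whose covolume encodes $r$ and $n$. Because the \ACSF is affine-invariant and peeling is affine-covariant (the convex-layer decomposition commutes with affine maps), the local displacement reduces to a single universal quantity: the asymptotic rate at which the convex layers of a lattice lying below a parabola advance. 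This is the same mechanism that yields the $\Theta(m^{4/3})$ layer count of Har-Peled and \Lidicky \cite{hl-pg-13} and the corner estimates of \autoref{sec_peel_corner}, but here one needs the \emph{exact} leading constant rather than the order of magnitude, and that constant is precisely $c\approx 1.6$.

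I expect consistency to be the main obstacle, for two reasons. Extracting the exact constant requires controlling the convex-layer structure below a parabola to leading order, and the number-theoretic fluctuations in counting lattice points on convex arcs make it delicate to show that successive layers advance \emph{smoothly} rather than in erratic jumps; one must establish a self-averaging statement guaranteeing that the inward motion of a batch of layers concentrates around its mean. Moreover, consistency must hold uniformly over all curvatures occurring along $\gamma(t)$ and degenerate gracefully near any corners of the initial data, where $r\to\infty$ and the local model passes to the quarter-grid analysis of \autoref{sec_peel_corner}. Once a monotone, stable, consistent scheme is in hand, the Barles--Souganidis theorem delivers locally uniform convergence of $u_n$ to the unique viscosity solution, and hence the pointwise convergence of the peeled boundary to $\gamma'$ asserted in the statement.
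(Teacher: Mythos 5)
The statement you were given is \autoref{conj:A:C:S:F}, which the paper explicitly leaves open: it offers only a non-rigorous justification (Balog and \Barany's $\Theta((nr)^{2/3})$ bound on the number of hull vertices of a grid inside a disk, combined with an unproven equidistribution assumption for those vertices) together with two partial rigorous results, \autoref{theorem_4over3} and \autoref{theorem_peel_corner}. So there is no proof in the paper to compare against; the only question is whether your argument actually proves the conjecture, and it does not. Recasting the problem as convergence of a monotone scheme to the viscosity solution of the affine level-set equation and invoking the Barles--Souganidis framework is a sensible, arguably the ``right,'' program, but the consistency hypothesis you correctly single out as the crux --- that one peeling step near a boundary point with radius of curvature $r$ advances the hull inward by $(1+o(1))\,c^{-1}r^{-1/3}n^{-4/3}$ with a \emph{specific universal constant} $c$, together with a self-averaging statement over mesoscopic batches of layers --- is precisely the open content of the conjecture. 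Neither the paper nor your proposal establishes that the convex layers of a lattice beneath a parabola advance at a well-defined limiting rate at all, let alone identifies the constant; the rigorous results available (\autoref{theorem_4over3}, \autoref{theorem_peel_corner}) pin this rate down only up to unspecified constant factors, which is no more than the paper's own heuristic already gives. Your write-up is therefore a reformulation of the conjecture as a consistency statement plus an appeal to standard machinery, not a proof.

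Two smaller points. First, your monotonicity claim is stated backwards: if $G\subseteq G'$, then the convex hull of each successive residual set of $G$ is contained in that of $G'$, so a point of $G$ is removed no \emph{earlier} in the peeling of $G'$ --- adding points can only increase layer depth. (The corrected direction is the one actually needed, and it is the monotonicity used in~\cite{hl-pg-13}.) Second, even granting consistency at smooth points, the scheme must degenerate gracefully at corners and on unbounded domains, where the local model is the quarter-grid of \autoref{sec_peel_corner}; there the paper proves only that the layer boundary is sandwiched between two hyperbolas a constant factor apart, which falls well short of the locally uniform consistency your framework requires.
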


In particular, the \ACSF is known to converge to an ellipse for any
closed initial boundary $\gamma$, in the limit as $t\to t^*$, when its shape
is rescaled to have constant area.  Correspondingly, by the
conjecture, the convex layers of $G_{[n]}(R)$ should also converge to
ellipses as $t\to t^*$ and $n\to\infty$. By symmetry, the convex
layers of a square grid should indeed converge to circles.

\subsection{Justification for \autoref{conj:A:C:S:F}}

One intuitive but somewhat vague justification for
\autoref{conj:A:C:S:F} is that the \ACSF is invariant under affine
transformations (in fact, it is the unique affine-invariant flow of
least order~\cite{c-gceip-03}), and grid peeling is also invariant
under a subgroup of affine transformations, namely the ones that
preserve the unit grid.

A more detailed justification is as follows. Balog and
\Barany~\cite{bb-ochip-91} proved that, if $R$ is the unit disk, then
$\CH(G_{[n]}(R))$ has $\Theta(n^{2/3})$ vertices. Equivalently, if $R$
is a disk of radius $r$, then $C=\CH(G_{[n]}(R))$ has
$\Theta((nr)^{2/3})$ vertices. Let us assume these vertices are
uniformly distributed along the boundary of $C$,\footnote{This seems
   to be the case empirically.} so a portion of $\partial C$ of length
$d$ contains $\Theta(dn^{2/3}r^{-1/3})$ vertices.

Now, let $R\subset \R^2$ be an arbitrary convex region with smooth
boundary $\gamma=\partial R$, and fix a small portion $\delta$ of
$\gamma$, of almost constant radius of curvature $r$. Let $d$ be the
length of $\delta$. Let $C=\CH(G_{[n]}(R))$ for large $n$. Then the
portion $\delta'$ of $\partial C$ that is close to $\delta$ contains
$\Theta(dn^{2/3}r^{-1/3})$ vertices. Let $\eps>0$ be much smaller than
$d$. In order for $\delta'$ to advance inwards by distance $\eps$, a
total of $\eps dn^2$ grid points must be removed. This should take
$\Theta(\eps n^{4/3}r^{1/3})$ iterations. Therefore, $\delta'$ should
move inwards at speed $\Theta(n^{-4/3}r^{-1/3})$. This is
$\Theta(n^{4/3})$ times slower than \ACSF, independently of~$r$.

\section{Implementation and experiments}\label{app}

We first implemented a simple front-tracking \ACSF approximation method
that works as follows. We sample a number $m$ of points
$p_1, \ldots, p_m$ along the given curve $\gamma$. For each point
$p_i$, we estimate the normal vector and the radius of curvature at
$p_i$ by the normal vector $v_i$ and radius $r_i$ of the unique circle
passing through points $p_{i-1},p_i, p_{i+1}$. We simultaneously let
all points move at the appropriate speeds for a short time interval
$t = c\cdot (d_{\mathrm{min}})^{4/3}$, where $d_{\mathrm{min}}$ is the minimum
distance between two consecutive points, and $c$ is a fixed
parameter. Then we repeat the process. Hence, as the sample points get
closer and closer, we take smaller and smaller time steps. Here the exponent $4/3$ was chosen in order for the simulation to be scale-independent.

A disadvantage of this method is that, as the curve becomes
elliptical, the sample points tend to bunch together at the sharp ends
of the ellipse, causing the time step to decrease very drastically.
In order to overcome this problem, we then implemented a more
sophisticated approach, in which each point is also given a tangential
velocity component $w_i$ (i.e. $w_i\perp v_i$). (Tangential velocities
should not affect the evolution of a flow, since they only cause curve
points to move within the curve.) We make the length of $w_i$
proportional to $\|v_i\|\cdot\log
(\|p_i-p_{i-1}\|/\|p_i-p_{i+1}\|)$. Hence, if $p_i$ is equidistant
from $p_{i-1}$ and $p_{i+1}$, then $\|w_i\|=0$. Otherwise, if $p_i$ is
closer to $p_{i-1}$ than to $p_{i+1}$, say, then $w_i$ points in the
direction of $p_{i+1}$.

This simple approach was enough for our purposes. For more advanced
flow simulation methods, see e.g.~\cite{c-gceip-03,ef-acsfm-17,moisan} and references cited there.

Our \ACSF C++ program may be found at \textsf{\si{ACSF.cpp}}, in the
ancillary files of this paper.

For the grid peeling simulations, we represent the grid subset as a
one-dimensional array that stores, for each row, the $x$-coordinates
of the leftmost and rightmost grid points in that row. We compute the
convex hull at each iteration using Andrew's modification of Graham's
scan~\cite{a-aeach-79,bcko-cgaa-08}. Thus, to find the
$\Theta(n^{4/3})$ layers of an $n\times n$ grid we take $O(n)$ time
per layer and $O(n^{7/3})$ time overall. Faster $O(n^2\log n)$-time
algorithms are possible~\cite{c-clps-85,nielsen,rr} but were unnecessary for our
experiments.  We implemented this peeling algorithm in two C++
programs, ``\textsf{\si{peel N2.cpp}}'' (for peeling $\N^2$) and
``\textsf{\si{peel shape.cpp}}'' (for peeling general shapes).

\subsection{Experiments on bounded regions}

In order to test \autoref{conj:A:C:S:F}, we ran both ACSF and grid peeling on several bounded convex regions, and compared the results. The regions we used are: $R_1 = \CH(\gamma)$ for the curve $\gamma$ of \autoref{fig_round_tr}; $R_2$, a square of side $1$; $R_3$, a triangle with vertices $(0,0)$, $(1,3/4)$, $(2/5,1)$; $R_4$, a half-disk of diameter $1$; and $R_5$, a disk of diameter $1$.

\begin{figure}
\centerline{\includegraphics{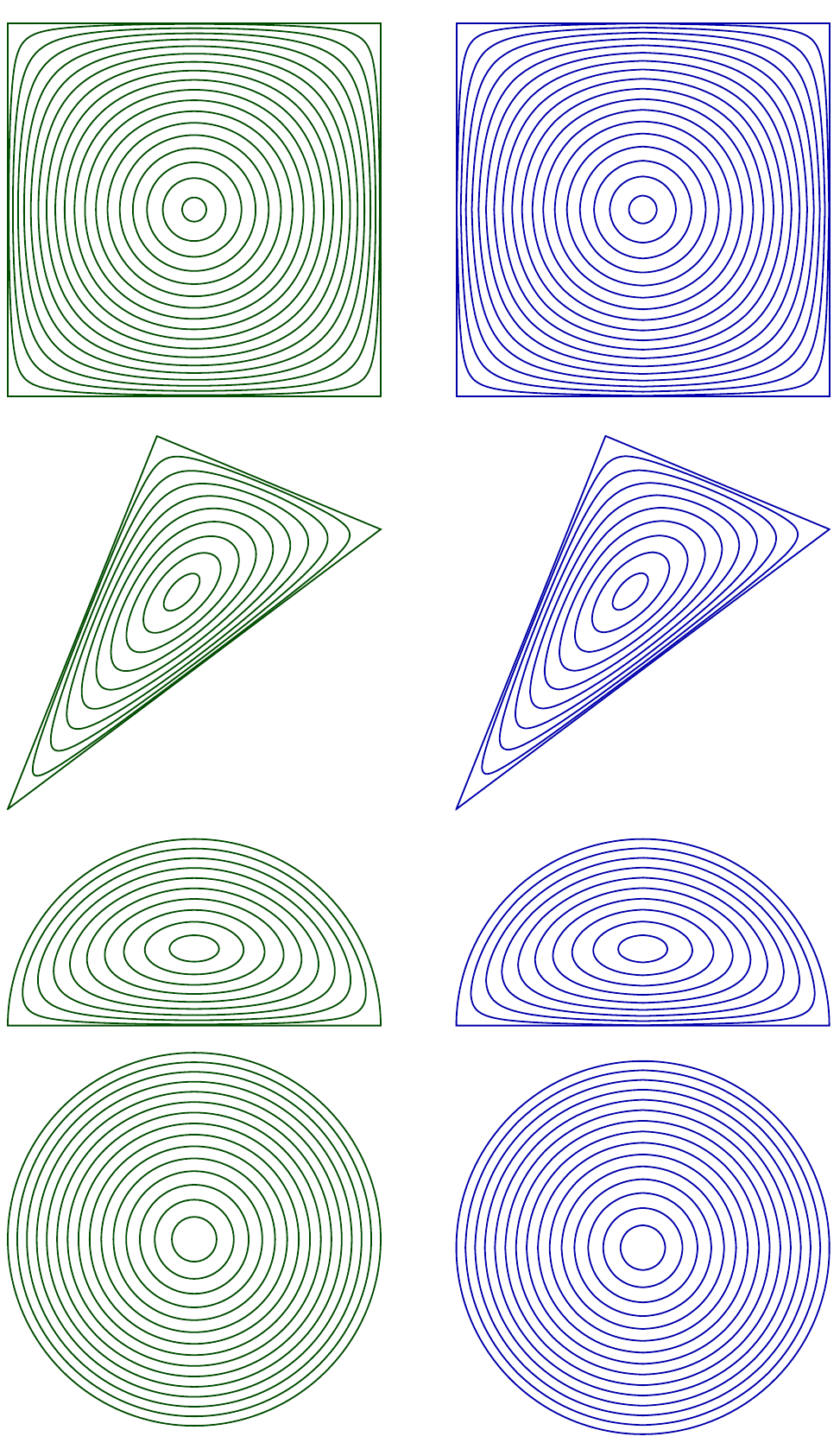}}
\caption{\label{fig_compare_shapes}Comparison between ACSF (left) and grid peeling (right) on several test shapes.}
\end{figure}

\autoref{fig_compare_shapes} shows the results for $R_2, \ldots, R_5$. Just as in \autoref{fig_round_tr}, here each left figure shows \ACSF with time steps of $0.02$, and each right figure shows every $2714$th convex layer, starting with a grid of spacing $1/5000$. Each left figure is barely distinguishable to the naked eye from the corresponding right figure.

In order to further test \autoref{conj:A:C:S:F}, we took the same regions $R_1, \ldots, R_5$, and measured the Hausdorff
distance between the results of the two processes, for increasing
values of the grid density $n$. For each $R_i$, we first ran our \ACSF simulation with higher-precision parameters, until the times $t_1, \ldots, t_5$ at which
the area enclosed by the curve decreased to $95\%, 90\%, \ldots, 75\%$ of its original area.\footnote{Actually, for $R_5$ we did not simulate \ACSF. We simply used the closed-form solution given by $r(t) = (r(0)^{4/3} - 4t/3)^{3/4}$, where $r(t)$ is the radius of the circle at time $t$.} Then we ran grid peeling using a variety of grid spacings; specifically, $1/n$ for $n=1000, 3000, 10000, 30000, 100000$. In each case, we ran the process until the times $m_1, \ldots, m_5$ at which the number of grid points decreased to $95\%, \ldots, 75\%$ of its original value.

For each case, we then computed the Hausdorff distance between the
\ACSF curve and the grid-peeling curve, both represented as polygonal
chains.\footnote{We computed the Hausdorff distances using a simple brute-force approach, using the fact that for convex polygonal chains the maximum distance is attained by a vertex (Atallah~\cite{atallah}). (Atallah~\cite{atallah} also presents a more efficent Hausdorff-distance algorithm for convex polygonal chains, which we did not use.)} (For comparison, we also computed the initial Hausdorff distance, which reflects the inherent inaccuracy in approximating the given smooth curve by grid points.) \autoref{fig_hausdorff} shows the results.

\begin{figure}
\centerline{\includegraphics{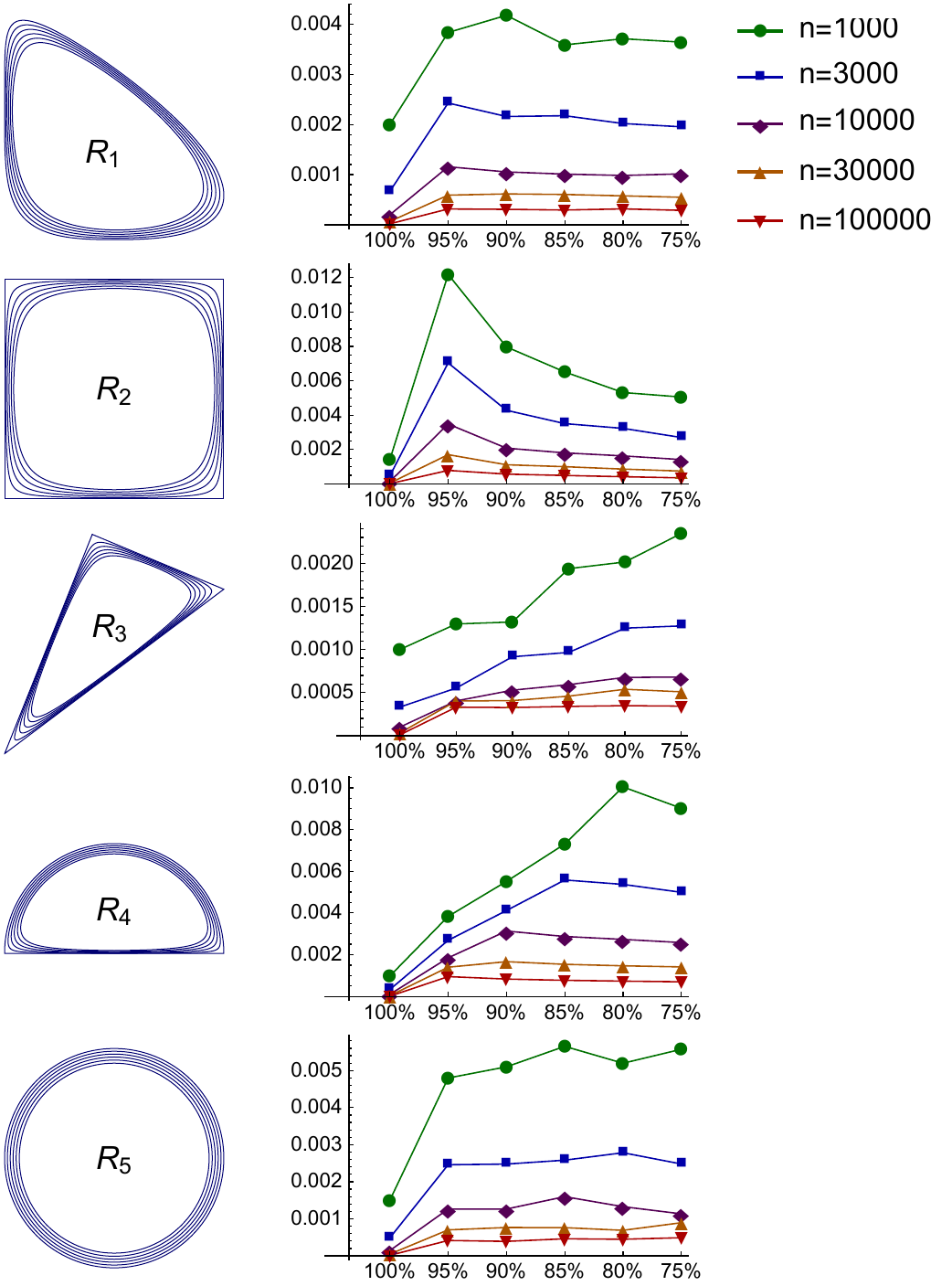}}
\caption{\label{fig_hausdorff}Hausdorff distance between the results of \ACSF and grid peeling, for a variety of test shapes, times, and grid spacings.}
\end{figure}

As we can see, the Hausdorff distance decreases with increasing $n$. Furthermore, for large values of $n$, the length of time has no major effect on the Hausdorff distance.

Finally, we checked whether the \ACSF times $t_1, \ldots, t_5$ are related to the grid peeling times $m_1, \ldots, m_5$ in the manner predicted by \autoref{conj:A:C:S:F}. To do this, we solved for the constant $c$ in (\ref{eq_conj}), and computed the approximations $c \approx m_i/(t_in^{4/3})$. The results are shown in \autoref{fig_capprox}. As can be seen, in all cases we obtained values close to $1.6$; furthermore, the approximations get closer to each other as either $n$ or $t$ increases.

\begin{figure}
\centerline{\includegraphics{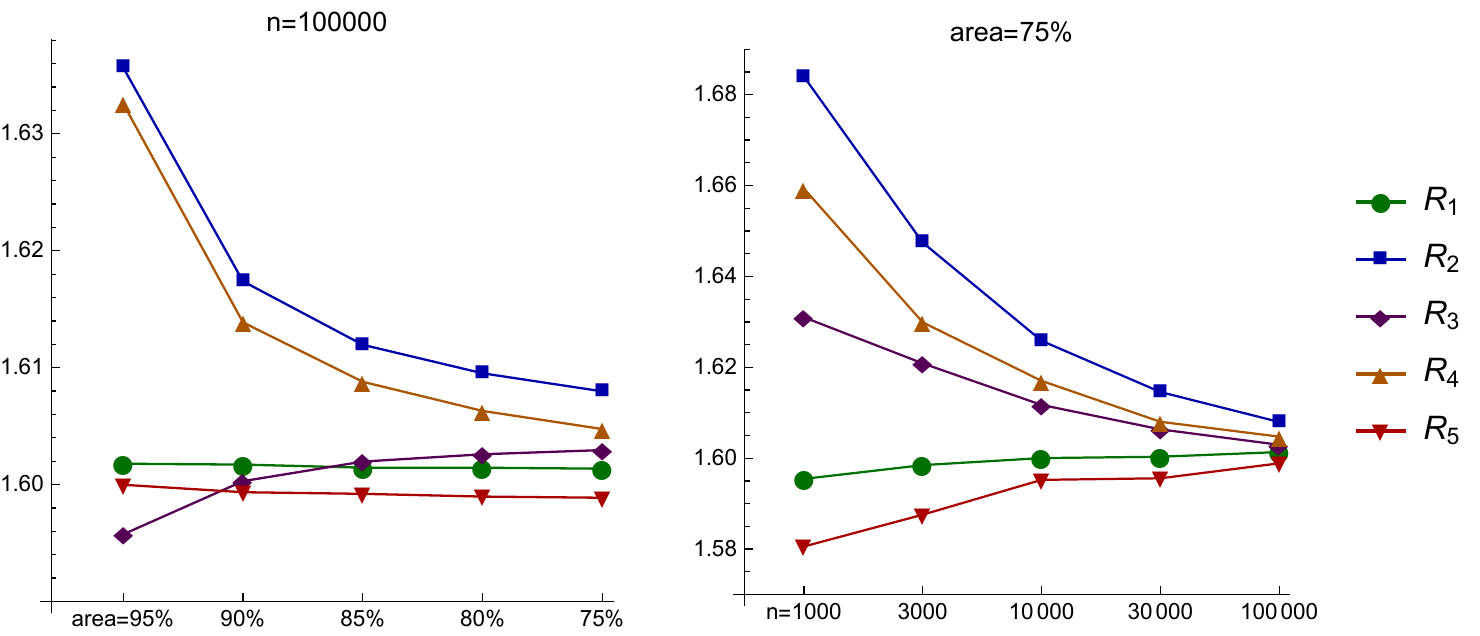}}
\caption{\label{fig_capprox}Approximations of the constant $c$ given by the experiments. The left plot uses the data for $n=100000$ and various values of the final area. The right plot uses the data for a final area of $75\%$ and various values of $n$.}
\end{figure}

\section{Number of convex layers for bounded regions}
\label{sec_bounded_reg}

In this section we prove that for bounded regions $R$, \autoref{conj:A:C:S:F} is asymptotically correct as far
as the number of convex layers is concerned:

\begin{theorem}\label{theorem_4over3}
    Let $R\subset \R^2$ be a bounded convex region. Then the number of
    convex layers of $G_{[n]}(R)$ is $\Theta(n^{4/3})$, with a
    constant of proportionality that depends on $R$.
\end{theorem}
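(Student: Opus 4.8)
The plan is to prove the upper and lower bounds separately, in both cases reducing to the known Balog–Bárány-type result that the convex hull of the grid points inside a disk of radius $r$ has $\Theta((nr)^{2/3})$ vertices (for the $(\Z/n)^2$ grid), and to the fact that each peeling iteration advances the boundary inward by an amount controlled by the local vertex density. The key geometric invariant is that a single convex layer, when removed, moves the boundary inward by roughly a constant times $(\text{vertex spacing})$, and by the Balog–Bárány count the vertex spacing along a boundary of curvature radius $r$ is $\Theta(n^{-2/3}r^{1/3})$. Since $R$ is bounded and convex, I would fix constants $0 < r_{\min} \le r_{\max}$ so that the relevant layers have curvature radius in this range throughout the peeling, away from the final collapse.

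\smallskip

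For the \textbf{upper bound} (at most $O(n^{4/3})$ layers), I would inscribe a disk $D$ of some fixed radius inside $R$ and argue by monotonicity: peeling $G_{[n]}(R)$ takes no more iterations than peeling $G_{[n]}(D)$ up to the point where the layers of $R$ have shrunk inside $D$, plus the iterations to peel $D$ itself. So it suffices to bound the number of layers of a disk. For a disk of radius $r$, each layer has $\Theta((nr)^{2/3})$ vertices and advances the boundary inward by $\Theta(n^{-2/3}r^{1/3})$ on average, so after $k$ layers the radius has decreased by roughly $\sum \Theta(n^{-2/3}r^{1/3})$. Setting up this as a discrete recurrence for the radius $r_k$ after $k$ layers, $r_{k+1} - r_k \approx -\Theta(n^{-2/3}r_k^{1/3})$, and approximating it by the ODE $r'(k) = -\Theta(n^{-2/3}r^{1/3})$, I get $r^{2/3}(k) \approx r^{2/3}(0) - \Theta(n^{-2/3}k)$, so the disk is exhausted after $k = \Theta(n^{2/3}\cdot r^{2/3}) = \Theta((nr)^{2/3}\cdot\ldots)$ — one must be careful here; redoing the bookkeeping, since the total number of grid points is $\Theta((nr)^2)$ and a layer near radius $r$ removes $\Theta((nr)^{2/3})$ points, the number of layers is on the order of $\int \frac{(nr)\,dr}{(nr)^{2/3}} = \Theta(n^{4/3})\int r^{1/3}\,dr = \Theta(n^{4/3})$ for fixed $r$. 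This matches the claimed rate.

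\smallskip

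For the \textbf{lower bound} (at least $\Omega(n^{4/3})$ layers), I would run the symmetric argument with a disk (or any smooth convex body) $D'$ \emph{circumscribing} $R$, or more simply observe that $R$ contains a disk of fixed radius and that peeling that inscribed disk requires $\Omega(n^{4/3})$ layers by the same count, together with the monotonicity fact that the number of layers of $G_{[n]}(R)$ is at least the number of layers of $G_{[n]}(R')$ for any $R' \subseteq R$ whose peeling layers stay inside $R$. The cleanest packaging is: the boundary advances inward per layer by at most $O(n^{-2/3}r^{1/3}) = O(n^{-2/3})$ (since $r \le r_{\max}$), so to traverse a fixed inward distance $\Omega(1)$ requires $\Omega(n^{2/3})$ layers at \emph{each} fixed scale — but to get the correct $n^{4/3}$ exponent I must integrate over all scales as above, or equivalently use that the total point count $\Theta(n^2)$ is removed at a per-layer rate of $O((nr)^{2/3}) = O(n^{2/3})$, forcing $\Omega(n^{4/3})$ layers.

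\smallskip

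The \textbf{main obstacle} is making rigorous the averaged per-layer inward displacement and the ``uniform vertex distribution'' assumption, which in the paper's own justification (\autoref{the_connection}) is flagged as empirical rather than proved. Balog–Bárány give the \emph{total} vertex count $\Theta((nr)^{2/3})$ but not a pointwise density or a guaranteed per-layer inward advance at every boundary arc; a single layer could in principle advance very unevenly. To circumvent this I would avoid any pointwise claim and instead work only with the global conserved quantity (the total number of remaining grid points, which is exactly known up to $\Theta$), combined with the global vertex-count bound per layer: since each layer removes exactly its vertices, and each layer has $O(n^{2/3})$ and $\Omega(n^{2/3})$ vertices (uniformly, because curvature stays in $[r_{\min}, r_{\max}]$ away from collapse, using affine invariance to reduce to the disk case), dividing the total $\Theta(n^2)$ points removed by the per-layer $\Theta(n^{2/3})$ yields $\Theta(n^{4/3})$ directly, sidestepping the displacement analysis entirely. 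The delicate point that remains is controlling the final $o(1)$-fraction of layers near the collapse, where curvature radius shrinks and the vertex count per layer drops; but those contribute only a lower-order correction to the count, so a crude bound there suffices.
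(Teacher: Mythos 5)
There is a genuine gap, and it sits exactly where the real difficulty of the theorem lies: the upper bound on the number of layers. Your lower bound is fine and is essentially the easy direction — each layer is a convex polygon with vertices in an $O(n)\times O(n)$ integer grid, so by \Jarnik's theorem (\autoref{lemma_convex_in_grid}) it removes only $O(n^{2/3})$ points, and dividing the $\Theta(n^2)$ total points by this gives $\Omega(n^{4/3})$ layers; you do not need Balog--\Barany or any density statement for that. But your upper bound rests on the claim that \emph{every} layer has $\Omega(n^{2/3})$ vertices (equivalently, advances inward by $\Omega(n^{-2/3})$), and this is not a known fact. Balog--\Barany applies only to the convex hull of \emph{all} lattice points in a smooth convex body, i.e., to the first layer of a disk; the intermediate sets $G_k$ produced by peeling are not of that form, so the result says nothing about layers $k\ge 2$. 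Worse, the claim is simply false for early layers of a general convex $R$: the first layer of a square grid has $4$ vertices, and a general bounded convex $R$ (a polygon, say) has no curvature bounds at all, so the assertion that ``curvature stays in $[r_{\min},r_{\max}]$'' is unfounded — and the statement that the layers quickly become smooth and disk-like is essentially \autoref{conj:A:C:S:F} itself, which is what one is trying to avoid assuming. A secondary issue: your monotonicity is flipped in the upper-bound paragraph. If $D\subseteq R$ then $G_{[n]}(D)\subseteq G_{[n]}(R)$ and layer depth is monotone under taking subsets, so an \emph{inscribed} disk gives a \emph{lower} bound on the number of layers; for an upper bound you must circumscribe.

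The paper's actual route avoids any per-layer lower bound. It invokes John's ellipsoid theorem to sandwich $R$ between ellipses $E_1\subseteq R\subseteq E_2$ of comparable area, uses a grid-preserving (unimodular) linear transformation (\autoref{lem_grid-pres}) to place an inscribed/circumscribed axis-parallel rectangle of comparable area, and then applies the rectangular-grid generalization of Har-Peled and \Lidicky (\autoref{lemma_H:L:rectangle}) together with the subset monotonicity of layer count. The hard (upper-bound-on-layers) half of that cited result is proved not by showing each layer is large, but by a global charging argument over \emph{active primitive vectors}: one shows that $\Omega(n^{1/2}\log n)$ primitive directions are each active in a constant fraction of the first $n$ iterations, which lower-bounds the \emph{total} number of points removed over those $n$ iterations (this machinery is reproduced in \autoref{sec_peel_corner} for the quarter-grid). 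If you want to repair your proposal, you either need that charging argument or the reduction to the rectangle case; the ``per-layer $\Omega(n^{2/3})$'' shortcut cannot be made to work as stated.
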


Correspondingly, in \ACSF, if the given curve $\gamma$ is dilated by a
factor of $k$, then its evolution is dilated in time by a factor of
$k^{4/3}$.

\autoref{theorem_4over3} follows from the result of~\cite{hl-pg-13} on
the number of convex layers of square grids. First, we note that the
result of~\cite{hl-pg-13} can be readily generalized to rectangular
grids using the same argument:\footnote{The techniques
   of~\cite{hl-pg-13} are also presented in \autoref{sec_peel_corner}
   below.}

\begin{lemma}%
    \label{lemma_H:L:rectangle}%
    Let $m, n$ be integers satisfying $\sqrt{m}\le n\le m^2$. Then the
    number of convex layers of $G=\{1,\ldots,m\}\times\{1,\ldots,n\}$
    is $\Theta\bigl(m^{2/3}n^{2/3}\bigr)$.
\end{lemma}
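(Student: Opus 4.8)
The plan is to reduce the rectangular case directly to the square case of Har-Peled and Lidick\'y by exploiting the invariance of grid peeling under grid-preserving affine transformations, together with a monotonicity argument. Since the claimed bound $\Theta(m^{2/3}n^{2/3})$ is exactly what one gets by ``interpolating'' between the square bound $\Theta(k^{4/3})$ for a $k\times k$ grid and the extreme degenerate cases, the key idea is to sandwich the $m\times n$ rectangular grid between two square-like grids whose layer counts we can control.

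First I would establish the upper bound. Assume without loss of generality that $m\le n$. I would embed the rectangle $\{1,\dots,m\}\times\{1,\dots,n\}$ inside a suitably scaled square grid: consider applying the linear map $(x,y)\mapsto(x, \lambda y)$ that stretches the $x$-direction (or equivalently rescaling one axis) so that the rectangle becomes roughly square. The obstacle is that such a map does not preserve the integer grid, so I cannot appeal to affine invariance naively. Instead, the cleaner route is a \emph{monotonicity} observation: if $G\subseteq G'$ are two finite point sets, then the number of convex layers of $G$ is at most the number of convex layers of $G'$, because each peeling step of $G'$ removes at least the vertices of $\CH(G')$, and one can track how the nested set $G$ is peeled relative to $G'$. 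Using monotonicity, I would bound the layer count of the $m\times n$ rectangle above by that of an $n\times n$ square (giving $O(n^{4/3})$) and below by that of an $m\times m$ square (giving $\Omega(m^{4/3})$); but these crude bounds do not match $\Theta(m^{2/3}n^{2/3})$, so monotonicity alone is insufficient.

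The correct approach is therefore to re-run the original Har-Peled–Lidick\'y analysis with the two side lengths carried as independent parameters. Their proof bounds the number of layers by estimating, for the convex layers near a corner of the grid, how the boundary advances; the controlling quantity is the number of grid points of $\{1,\dots,m\}\times\{1,\dots,n\}$ that lie below a convex arc, which is governed by the classical Andrews/Jarn\'ik bound on the number of vertices of a convex lattice polygon inside an $a\times b$ box, namely $O\bigl((ab)^{1/3}\bigr)$. Repeating their argument with a rectangular box of dimensions $m\times n$, the area swept per layer and the number of extreme points per layer both pick up factors of $m$ and $n$ asymmetrically, and the summation that produced $\Theta(k^{4/3})$ in the square case $m=n=k$ now produces $\Theta(m^{2/3}n^{2/3})$. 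The hypothesis $\sqrt{m}\le n\le m^2$ is exactly the regime in which the rectangle is ``not too degenerate,'' so that neither dimension's grid spacing dominates and the Andrews-type estimates remain in their valid range; I would check that this condition is what guarantees that the corner arcs have length comparable in both coordinates.

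The main obstacle I anticipate is the lower bound, and more precisely verifying that the rectangular Andrews/Jarn\'ik estimate combines correctly across all four corners and the flat sides without losing the constant-factor matching. In the square case the four corners are symmetric and the straight edges contribute negligibly; in the rectangular case one must confirm that the long edges (length $n$) still do not create extra layers beyond the $\Theta(m^{2/3}n^{2/3})$ budget, and that the condition $\sqrt m\le n\le m^2$ rules out the pathological regime where one pair of opposite edges is so long that peeling behaves essentially one-dimensionally. Since the excerpt explicitly states that the techniques of~\cite{hl-pg-13} are reproduced in \autoref{sec_peel_corner}, the honest plan is to defer the detailed arc estimates to that section and here argue that every step of the square proof generalizes verbatim upon replacing the single parameter $m$ by the two parameters $m,n$, with the stated inequality ensuring all the error terms stay subdominant.
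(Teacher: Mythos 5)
Your final plan---re-running the Har-Peled--Lidick\'y square-grid analysis with the two side lengths carried as independent parameters, with the rectangular Jarn\'ik bound $O\bigl((mn)^{1/3}\bigr)$ on convex lattice polygons in an $m\times n$ box as the key ingredient---is essentially the paper's own approach: the paper gives no proof of this lemma beyond asserting that the square-grid argument ``can be readily generalized to rectangular grids using the same argument,'' and the rectangular Jarn\'ik bound you invoke is exactly its \autoref{lemma_convex_in_grid}. Your reading of the condition $\sqrt{m}\le n\le m^2$ as excluding the near-one-dimensional regime (where peeling would give far more than $m^{2/3}n^{2/3}$ layers) is also the correct role of that hypothesis.
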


Now, let $R\subset \R^2$ be a given bounded convex region. By John's
ellipsoid theorem~\cite{j-epis-48}, there exist two ellipses that
satisfy $E_1 \subseteq R \subseteq E_2$, such that the ratio between
their areas is at most $4$.

Let $n$ be an integer, and let $G=G_{[n]}(R)$. Scale up all these sets
by a factor of $n$, obtaining $R'$, $E'_1$, $E'_2$, and
$G'\subset \Z^2$. The grid peeling process is clearly invariant to
linear transformations. We now focus on \emph{grid-preserving} linear
transformations, that is, linear transformations that map $\Z^2$
bijectively into $\Z^2$. A linear transformation $f:\R^2\to\R^2$ is
grid-preserving if and only if it is of the form $f(p)=Mp$, where $M$
is a $2\times 2$ integer matrix with determinant $\pm 1$.

\begin{lemma}%
    \label{lem_grid-pres}%
    Let $v_1,v_2\in\Z^2$ be linearly independent vectors. Then there
    exists a grid-preserving linear transformation $f$ that maps $v_1$
    into the $x$-axis, and such that $f(v_2)$ has slope at least $2$
    in absolute value.
\end{lemma}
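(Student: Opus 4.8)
The plan is to realize the desired map as a product of two transformations in $GL_2(\Z)$, the group of $2\times 2$ integer matrices of determinant $\pm 1$, which by the characterization just stated is exactly the set of grid-preserving linear maps. First I would reduce $v_1$ to the $x$-axis. Writing $v_1=(a,b)$ and $g=\gcd(a,b)$, set $(a',b')=(a/g,\,b/g)$, so that $\gcd(a',b')=1$. By B\'ezout's identity there are integers $p,q$ with $pa'+qb'=1$, and I would take
\[
   M_1=\begin{pmatrix} p & q \\ b' & -a'\end{pmatrix},
\]
whose determinant is $-(pa'+qb')=-1$. A direct check gives $M_1 v_1=(pa+qb,\;b'a-a'b)=(pa+qb,\,0)$, so $M_1$ sends $v_1$ onto the $x$-axis, as required.

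Next I would exploit the remaining freedom to control the slope of the image of $v_2$. Since $v_1,v_2$ are linearly independent and $M_1$ is invertible, $M_1 v_1$ and $M_1 v_2$ are also independent; as $M_1 v_1$ lies on the $x$-axis, the vector $M_1 v_2=(u,w)$ must have $w\neq 0$. The key observation is that the integer shears $\begin{pmatrix}1 & k\\ 0 & 1\end{pmatrix}$, with $k\in\Z$, lie in $GL_2(\Z)$ and fix the $x$-axis pointwise, so post-composing with one of them preserves the property established in the first step. Such a shear sends $(u,w)$ to $(u+kw,\,w)$, and I want to choose $k$ to make $|u+kw|$ small relative to $|w|$. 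Taking $k$ to be the nearest integer to $-u/w$ gives $|u+kw|=|w|\cdot\operatorname{dist}(-u/w,\Z)\le |w|/2$, whence the slope $w/(u+kw)$ has absolute value at least $2$ (and is vertical, hence trivially admissible, in the degenerate case $u+kw=0$). The composite $f(p)=\begin{pmatrix}1 & k\\ 0 & 1\end{pmatrix}M_1\,p$ is then grid-preserving and enjoys both required properties.

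The argument is essentially elementary number theory, so I expect no serious obstacle; the one point needing care is the quantitative slope bound, which I would present as the crux. The factor $2$ is not arbitrary but is precisely what the nearest-integer estimate $\operatorname{dist}(\,\cdot\,,\Z)\le 1/2$ delivers, and it is in fact the best uniform guarantee available: among all grid-preserving maps fixing the reduction, the achievable first coordinate of the image of $v_2$ ranges over a single residue class modulo $|w|$, so its minimum magnitude can equal $|w|/2$, forcing slope exactly $2$ for some inputs. I would therefore emphasize that it is the shear subgroup---the stabilizer of the $x$-axis inside $GL_2(\Z)$---that supplies exactly the one integer parameter needed to push the slope past the threshold.
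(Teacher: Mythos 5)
Your proof is correct and follows essentially the same route as the paper: first reduce $v_1$ to the $x$-axis by a determinant-$\pm 1$ integer matrix, then post-compose with an integer shear fixing the $x$-axis, choosing the shear parameter as the nearest integer to make the first coordinate of the image of $v_2$ at most half its (nonzero) second coordinate in magnitude. The only difference is that you make the first step explicit via B\'ezout, whereas the paper simply asserts the existence of such a transformation.
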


\begin{proof}
    We first apply a grid-preserving linear transformation $f_1$ that
    maps $v_1$ to the $x$-axis. Denote $f_1(v_2)=(x_2,y_2)$. Then we
    apply a horizontal shear $f_2:(x,y)\mapsto(x-my,y)$, for an
    appropriate $m\in\Z$. The appropriate $m$ is either
    $\lfloor y_2/x_2 \rfloor$ or $\lceil y_2/x_2 \rceil$.
\end{proof}

Now, the ellipse $E'_1$ contains a rectangle $T_1$ whose area is at
least a constant fraction of the area of $E_1$. Applying
\autoref{lem_grid-pres}, we turn $T_1$ into a parallelogram $T'_1$
with two horizontal sides and two shorter, close-to-vertical
sides. Hence, $T'_1$ contains an axis-parallel rectangle $T''_1$ whose
area is at least a constant fraction of the area of $T'_1$. If $n$ is
large enough, then the side lengths $m_1$ and $m_2$ of $T''_1$ will
satisfy $\sqrt m_1 \le m_2\le m_1^2$. Therefore, we can apply the
lower bound of \autoref{lemma_H:L:rectangle} on $T''_1$.

The upper bound proceeds similarly, using ellipse $E'_2$. This completes the proof of \autoref{theorem_4over3}.

\section{Peeling a quarter-infinite grid}\label{sec_peel_corner}

A simple test case for \autoref{conj:A:C:S:F} is the region
$R=\{(x,y) : x\ge 0, y\ge 0\}$ (the first quadrant of the plane). In
this case, the grid spacing $n$ is irrelevant, so we can simply take
$G=\N^2$ (where $\N = \{0,1,2,\ldots\}$). The boundary of $R$ is the
L-shaped curve $\gamma_L=\{(x,0):x\ge 0\}\cup\{(0,y):y\ge 0\}$.

The time-dependent hyperbola
\begin{equation}\label{eq_gammaL}
    \gamma_L(t) = \{(x,y) : y=(4/3^{3/2}) t^{3/2}/x\}.
\end{equation}
satisfies the \ACSF condition for all $t>0$ (as can be verified by a
simple calculation), and it converges to $\gamma_L$ as $t\to 0^+$. The
hyperbola (\ref{eq_gammaL}) is the only solution satisfying this
property.\footnote{The existence of a unique solution was proven for
   doubly-differentiable curves, without the assumption of closedness,
   by Angenent \etal~\cite[Section 6]{ast-ahenc-98} and stated for
   closed curves without the assumption of smoothness
   in~\cite[Theorem~3.28]{c-gceip-03}. In the case here, the
   uniqueness of the solution can be proven by applying the result for
   closed curves to the boundary of a large square: If the
   quarterplane had multiple solutions, they could be approximated
   arbitrarily well by the solution near the corner of a large enough
   square, which would necessarily also have multiple solutions,
   violating \cite[Theorem~3.28]{c-gceip-03}.}
	
Hence, by \autoref{conj:A:C:S:F}, we would expect the convex layers of $\N^2$ to approach hyperbolas as the process goes on. Indeed, this is what occurs experimentally. In the next subsections we present our experimental and theoretical results regarding the convex layers of $\N^2$. But let us first introduce some notation.

\begin{figure}
    \centerline{\includegraphics{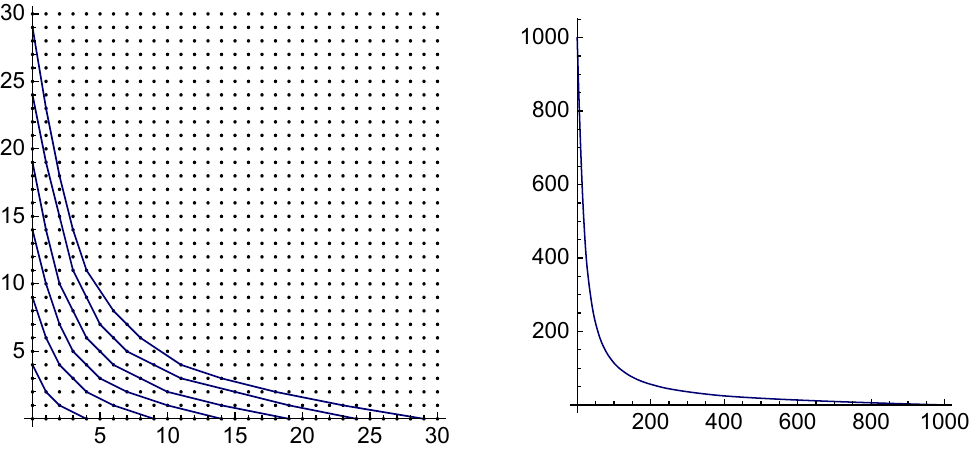}}
    \caption{\label{fig_N2layers}Left: Convex layers
       $5, 10, 15, \ldots, 30$ of $\N^2$. Right: $1000$\th convex
       layer of $\N^2$.}
\end{figure}

\paragraph{Notation.} Throughout this section, we will define the sets $G_n, L_n, H_n$ as in the Introduction for our choice of $G=\N^2$. Hence,
$G_0 = \N^2$, $H_n = \CH{(G_{n-1})}$, $L_n$ is the set of vertices of
$H_n$, and $G_n = G_{n-1} \setminus L_n$. Let $B_n = \partial
H_n$. Let $(K_n, K_n)$ be the point of intersection of $B_n$ with the
line $y=x$, so the point $(K_n, K_n)$ splits $B_n$ into two congruent
``arms''. Let $s(n) = |L_1| +\cdots+|L_n|$ be the number of grid points removed
up to iteration $n$. Given integers $x,n\in \N$, let $a_x(n)$ be the
number of points of the $x$\th column of $G$ that have been removed up
to iteration $n$; i.e., let
$a_x(n) = \bigl|\bigl\{(\{x\} \times \N) \cap (L_1 \cup \cdots \cup
L_n)\bigr\}\bigr|$. Note that in a fixed column, the points are
removed in increasing order of $y$-coordinate; furthermore, for every
fixed $n$ the sequence $a_0(n), a_1(n), a_2(n), \ldots$ is
nonincreasing, with $a_0(n) = n$, $a_{n-1}(n) = 1$, and $a_n(n) = 0$. See \autoref{fig_N2layers}.

\subsection{Experiments}

Given $n$, let $B'_n$ be the result of scaling down the $n$th layer boundary $B_n$ by a factor of $n^{3/4}$. According to \autoref{conj:A:C:S:F}, as $n\to \infty$ we would expect $B'_n$ to converge to the hyperbola $\gamma_L(1/c)$ of (\ref{eq_gammaL}). We would like to measure to what extent this happens. However, since we do not know the constant $c$ to high precision, we performed these measurements as follows:

Given $n$, let $f_n(x) = K_n^2/x$ define the hyperbola that passes through the point $(K_n,K_n)$. Given a small real number $0<\alpha<1$, let $x_\alpha(n)$ denote the smallest integer $x>K_n$ for which $|a_x(n) - f(x)| > \alpha f(x)$. Hence, the portion of $B_n$ that is between $x$-coordinates $x=K_n$ and $x=x_\alpha(n)$ is within an $\alpha$-fraction of the hyperbola $f(x)$. By symmetry, the same can be said about the portion of $B_n$ that is between $y$-coordinates $y=K_n$ and $y=x_\alpha(n)$. The ratio $x_\alpha(n) / K_n$ provides a scale-independent measure of the extent to which $B_n$ is $\alpha$-close to a hyperbola.

\begin{figure}
\centerline{\includegraphics{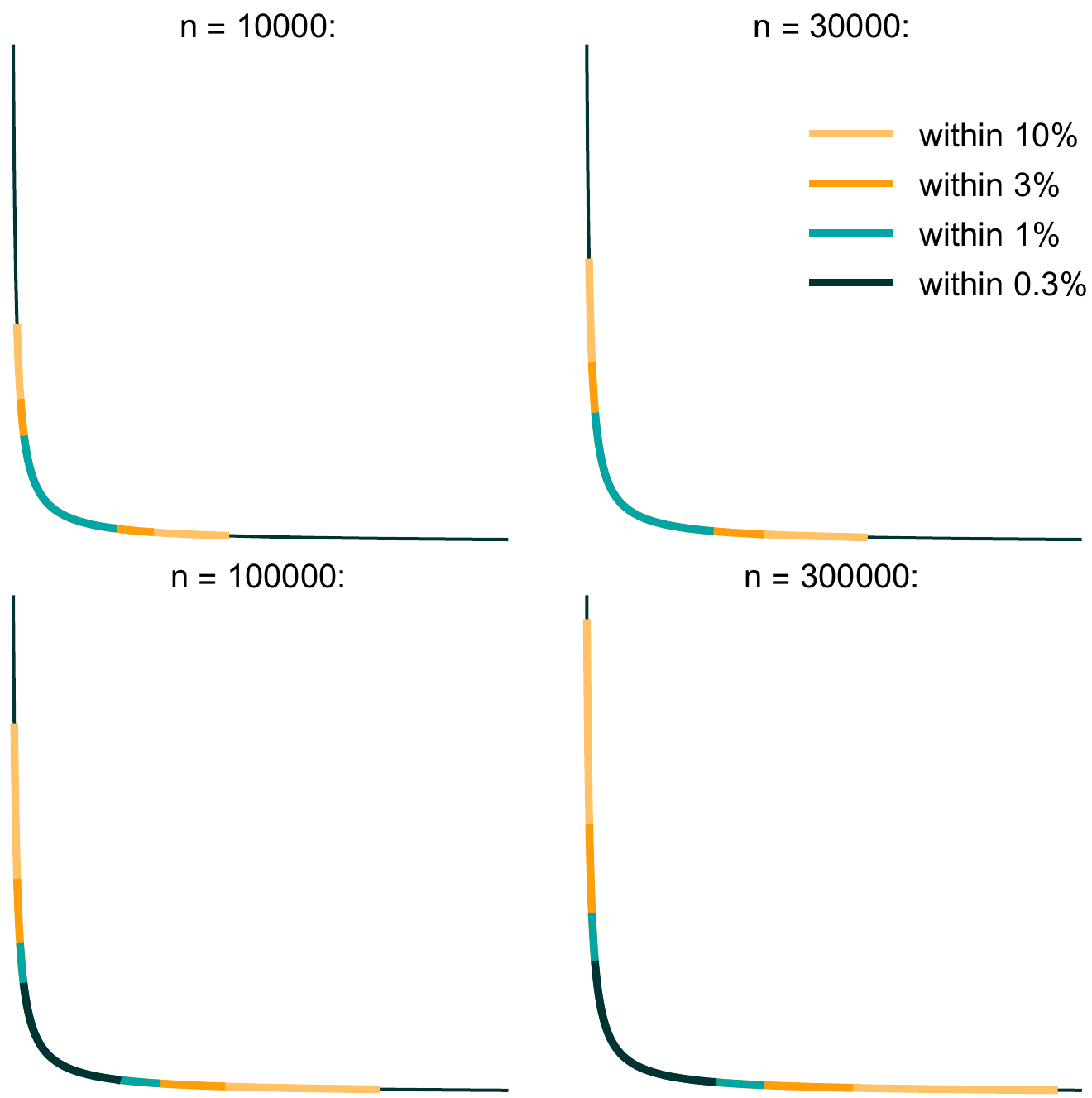}}
\caption{\label{fig_hypMeasure}Measurements indicating that as $n$ increases, an increasingly long portion of $B_n$ is close to a hyperbola.}
\end{figure}

\autoref{fig_hypMeasure} illustrates the results of these measurements for increasing values of $n$ and decreasing values of $\alpha$. Specifically, we took $n = 10000, 30000, 100000, 300000$ and $\alpha = 0.1, 0.03, 0.01, 0.003$. As can be seen, for each fixed $\alpha$, the portion of the hyperbola that is within an $\alpha$-fraction of $B_n$ grows as $n$ increases.

\begin{figure}
\centerline{\includegraphics{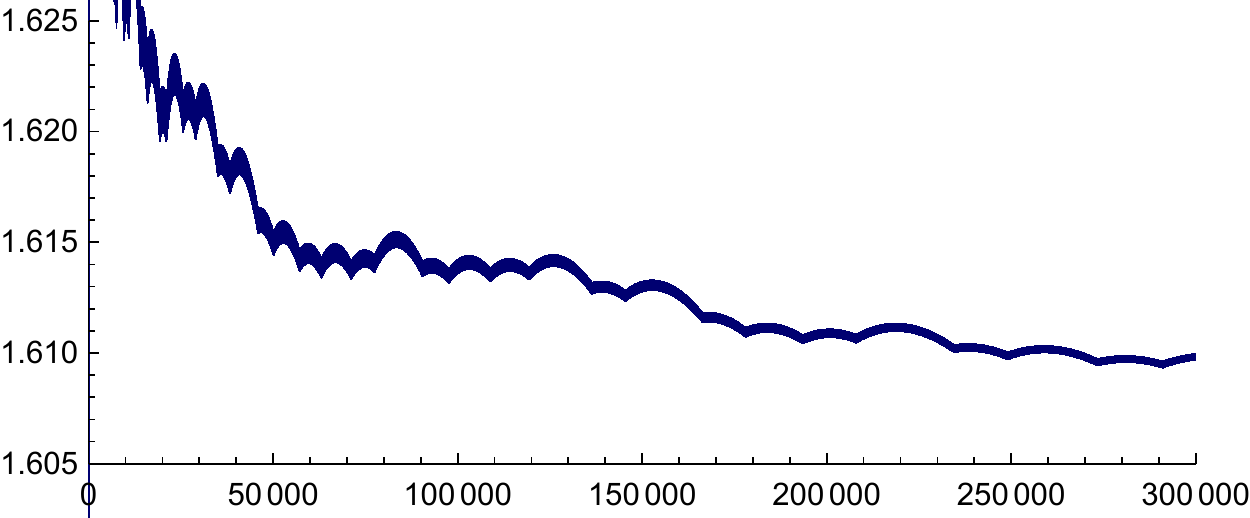}}
\caption{\label{fig_N2c}Approximations of the constant $c$ given by experimental measurements of the point $(K_n, K_n)$.}
\end{figure}

\paragraph{Measuring the time relationship.} \autoref{conj:A:C:S:F} predicts that $K_n \approx 2(n/(3c))^{3/4}$ for large $n$. We tested this prediction by solving for $c$ and seeing whether it approaches a constant close to $1.6$. The results are shown in \autoref{fig_N2c}.

\paragraph{Layer sizes.} The sequence $\{|L_n|\}$ represents the number of vertices in successive layers of the convex-layer decomposition of $\N^2$. This sequence is now cataloged in the Online Encyclopedia of Integer Sequences as A293596. It starts as
\begin{equation*}
1, 2, 2, 3, 4, 4, 3, 4, 6, 6, 5, 4, 6, 6, 8, 7, 6, 6, 6, 8, 9, 10, 10, 8, 8, 7, 8, 10, 10, 12, 13, 12, \ldots
\end{equation*}

\begin{figure}
    \centerline{\includegraphics{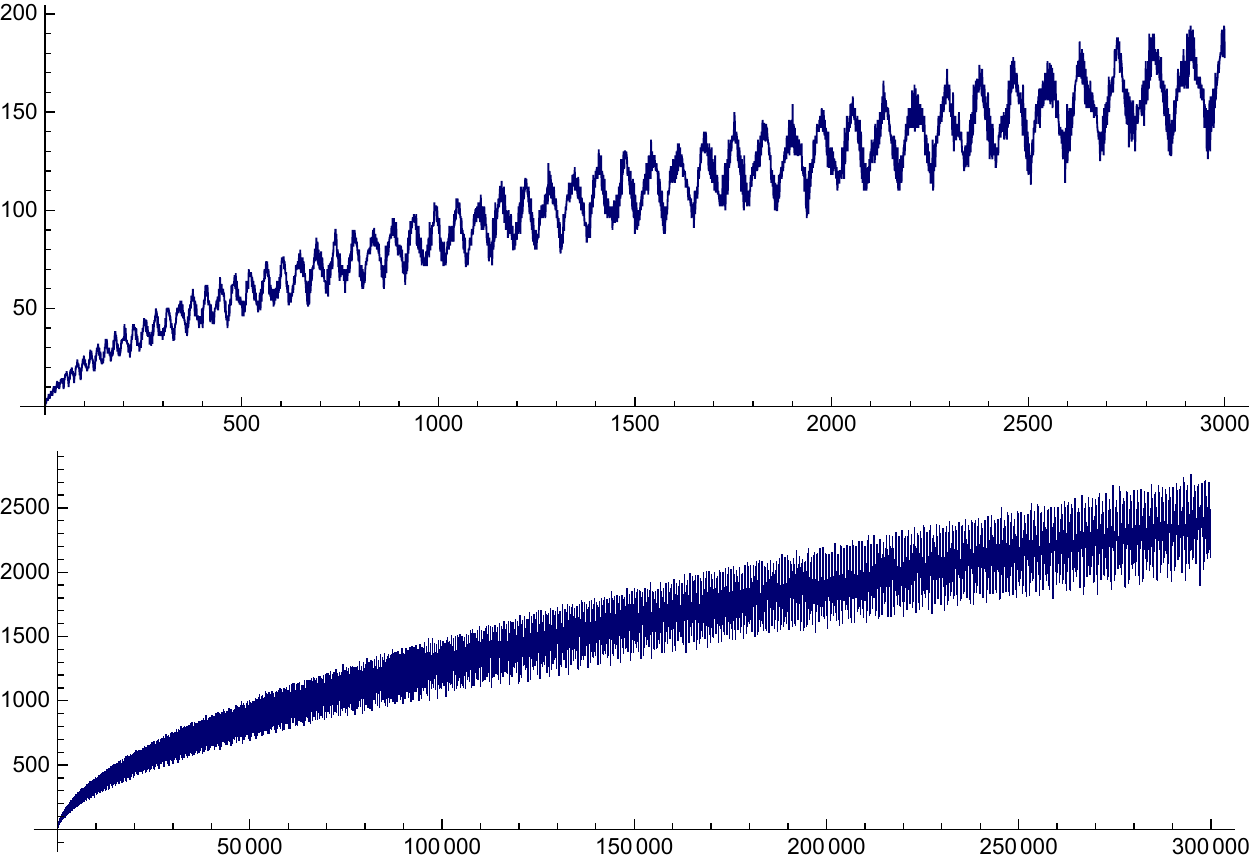}}
    \caption{\label{fig_N2:layer:sizes}%
       Number of vertices of the $n$\th
       convex layer of $\N^2$ as a function of $n$.}
\end{figure}

\autoref{fig_N2:layer:sizes} plots this sequence at different scales. As can be seen, this sequence has regular waves that slowly increase in both length and
amplitude. Nevertheless, the relative amplitude of the waves (the ratio between their amplitude and their height above the $x$-axis) seems to decrease, perhaps tending to zero.

\subsection{Rigorous results}

We now obtain some rigorous results for the grid peeling of $G=\N^2$.

\begin{figure}
    \centerline{\includegraphics{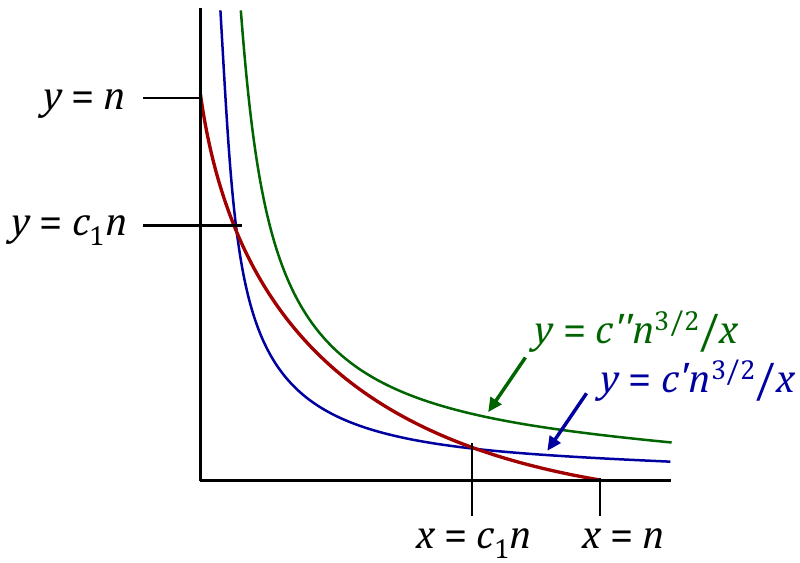}}
    \caption{\label{fig_2_hyperbolas}Schematic drawing of the upper
       and lower bounds for the $n$\th convex layer of $\N^2$.}
\end{figure}

\begin{theorem}\label{theorem_peel_corner}
    The convex-layer decomposition of the quarter-infinite grid
    $G=\N^2$ satisfies the following properties:
    \begin{enumerate}
        \item The number of grid points removed up to iteration $n$ satisfies $s(n) = \Theta(n^{3/2}\log n)$.
        \item $a_x(n) = O(n^{3/2}/x)$ for all $n$, and
        $a_x(n)=\Omega(n^{3/2}/x)$ for $c_1\sqrt n\le x \le c_1 n$,
        where $0<c_1<1$ is some constant.
        \item $K_n = \Theta(n^{3/4})$.
        \item $|L_n| = O(n^{1/2}\log n)$ and $|L_n| = \Omega(\log n)$.
    \end{enumerate}
\end{theorem}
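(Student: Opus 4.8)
The crux is part~2; parts~1, 3 and~4 will all follow from it by elementary manipulations, so the plan is to establish the two-sided estimate $a_x(n)=\Theta(n^{3/2}/x)$ first and then harvest the rest. For the \emph{upper} bound $a_x(n)=O(n^{3/2}/x)$ the main tool is a monotonicity property that I would record first: if $G\subseteq G'$, then every point of $G$ is removed no later in $G$ than in $G'$, i.e.\ the peeled sets satisfy $G_k\subseteq G'_k$ for all $k$. This follows by induction on $k$: if $p\in G_k$ is not a vertex of $\CH(G_k)$ then $p\in\CH(G_k\setminus\{p\})\subseteq\CH(G'_k\setminus\{p\})$, so $p$ is not a vertex of $\CH(G'_k)$ either, whence $p\in G'_{k+1}$. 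Granting this, fix $(x,y)$ with $\sqrt n\le x\le n$ and let $Q$ be the $2x\times 2y$ grid rectangle centered at $(x,y)$. Then $Q\subseteq\N^2$, so the depth of $(x,y)$ (the index of the layer removing it) is at least its depth in $Q$; since $(x,y)$ is the center of $Q$ and $Q$ meets the aspect-ratio hypothesis of \autoref{lemma_H:L:rectangle} in this range, a central point survives to depth $\Omega(x^{2/3}y^{2/3})$ by the lower-bound half of that lemma. Hence $\mathrm{depth}(x,y)=\Omega(x^{2/3}y^{2/3})$, and counting a column, $a_x(n)=\#\{y:\mathrm{depth}(x,y)\le n\}=O((n/x^{2/3})^{3/2})=O(n^{3/2}/x)$. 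For $x\le\sqrt n$ the trivial bound $a_x(n)\le a_0(n)=n\le n^{3/2}/x$ already suffices, so the upper bound holds for all $x$.

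The \emph{lower} bound of part~2 is the main obstacle. It cannot be obtained the same way, because monotonicity only bounds the depth in $\N^2$ from \emph{below}: there is no useful superset of $\N^2$ inside $\Z^2$ whose peeling one controls (supersets such as $\Z\times\N$ have no hull vertices at all). Static facts are also insufficient: the reflection symmetry gives $a_{a_x(n)}(n)\approx x$, which every hyperbola satisfies and which fixes no constant, while convexity of $B_n$ together with $a_0(n)=n$ only yields $s(n)=\Omega(n^{3/2})$ (the removed region contains the square $[0,K_n]^2$), missing the logarithmic factor entirely. That logarithm lives in the far tail $x\approx n$, exactly where one must prove the peeling front genuinely advances. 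I would attack this by a \emph{dynamic} comparison: show by induction on $n$ that $B_n$ stays outside a shrunken hyperbola $y=c'n^{3/2}/x$ throughout the bulk, the inductive step resting on a local lemma to the effect that wherever the boundary currently matches (up to constants) a hyperbola of radius of curvature $\rho$, at least one vertex appears on each arc of length $\Theta(\rho^{1/3})$, so the front advances normally at rate $\Omega(\rho^{-1/3})$---the discrete shadow of the \ACSF speed. Converting this heuristic into a rigorous per-step advance lower bound that is uniform along the tail is the hard part and the true content of \autoref{theorem_peel_corner}.

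Given part~2, parts~1 and~3 are immediate. For part~1, sum over columns: the cap $x\le c_1\sqrt n$ contributes $\Theta(n)\cdot\Theta(\sqrt n)=\Theta(n^{3/2})$, the bulk $c_1\sqrt n\le x\le c_1 n$ contributes $\sum\Theta(n^{3/2}/x)=\Theta(n^{3/2}\log n)$ by both halves of part~2, and the remaining tail adds only $O(n^{3/2})$; hence $s(n)=\Theta(n^{3/2}\log n)$. For part~3, note that the boundary height in column $x$ is $a_x(n)$ up to an additive $O(1)$, so the diagonal point gives $a_{K_n}(n)=K_n+O(1)$; feeding $x=K_n$ into part~2 yields $K_n=\Theta(n^{3/2}/K_n)$, i.e.\ $K_n=\Theta(n^{3/4})$. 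The only delicate point is that the lower bound of part~2 is available only for $c_1\sqrt n\le x\le c_1 n$; one bootstraps by first using the upper bound to get $K_n=O(n^{3/4})$ and then the lower bound at $x=\eps n^{3/4}$ (which lies in the valid window for large $n$) to get $K_n=\Omega(n^{3/4})$.

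Finally, part~4. Since the lower-left boundary $B_n$ is the graph of a convex function, it has at most one vertex per column, so $|L_n|$ is exactly the number of vertices of the convex lattice chain $B_n$ lying between the two axes. For the upper bound, split $0\le x\le n$ into $O(\log n)$ dyadic blocks $[x_0,2x_0]$; inside each block $B_n$ is a convex lattice chain enclosing, with the block's chord, an area at most $x_0\cdot a_{x_0}(n)=O(n^{3/2})$ by the upper bound of part~2, so the classical \Jarnik--Andrews bound gives $O((n^{3/2})^{1/3})=O(\sqrt n)$ vertices per block and $|L_n|=O(\sqrt n\log n)$. For the lower bound, look at the vertices of $B_n$ with $x$-coordinate in the bulk $[c_1\sqrt n,\,c_1 n]$: between two consecutive such vertices $B_n$ is a single chord, and a direct computation shows that a chord joining $(a,\Theta(n^{3/2}/a))$ to $(b,\Theta(n^{3/2}/b))$ has $x\cdot L(x)=\Theta((b/a)\,n^{3/2})$ at its interior maximum, so it would violate the upper bound $a_x(n)\le C'n^{3/2}/x$ unless $b/a=O(1)$. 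Thus consecutive bulk vertices have $x$-coordinates within a bounded factor, and covering the multiplicative range $[c_1\sqrt n,\,c_1 n]$ (of width $\sqrt n$) forces $\Omega(\log\sqrt n)=\Omega(\log n)$ vertices, which is the claimed lower bound on $|L_n|$.
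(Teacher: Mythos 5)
Your upper-bound half is essentially sound, and parts 1, 3, 4 would indeed follow by routine manipulations \emph{if} you had the lower bound of part 2. But you do not have it: you explicitly reduce everything to the estimate $a_x(n)=\Omega(n^{3/2}/x)$ on the bulk range, and for that estimate you offer only a heuristic --- an induction on $n$ resting on an unproved ``local lemma'' that the discrete front advances normally at rate $\Omega(\rho^{-1/3})$ wherever the boundary locally resembles a hyperbola of curvature radius $\rho$ --- and you yourself concede that making this rigorous ``is the hard part and the true content'' of the theorem. That concession is accurate, and it means the proposal proves only the $O(\cdot)$ halves of the four claims. The per-step-advance statement you want is essentially a local, quantitative form of \autoref{conj:A:C:S:F} itself, which is precisely what nobody knows how to prove; no one-step argument of this kind is known even to show that a single layer removes $\Omega(\sqrt n)$ points. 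The paper's proof avoids any per-iteration analysis entirely: it counts \emph{primitive vectors} $v=(x_v,-y_v)$ with $x_vy_v\le \Theta(\sqrt n)$ (there are $\Theta(\sqrt n\log n)$ of them, by \autoref{lem_rp} applied to dyadic rectangles), and shows each such $v$ must be \emph{active} (its tangent line supports an edge of $H_i$) in at least $n/2$ of the first $n$ iterations --- because every inactive iteration permanently kills one line of $\mathcal L_v$, and by \autoref{lem_Lv_rect} together with the upper bound on $a_x(n)$ only $n/2$ such lines can ever die. Summing activity over vectors gives $s(n)=\Omega(n^{3/2}\log n)$ directly via \autoref{obs_num_active}, and a charging argument restricted to vectors of slope near $\mu$ shows a constant fraction of a suitable $\Theta(n^{3/4}/\sqrt\mu)\times\Theta(n^{3/4}\sqrt\mu)$ rectangle is deleted, which yields the pointwise bound $a_x(n)=\Omega(n^{3/2}/x)$. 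This global direction-counting is the idea your proposal is missing.

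Two smaller remarks on the parts you did argue. First, your upper bound on $a_x(n)$ via monotonicity of peeling under set inclusion is a valid alternative to the paper's direct use of \Jarnik's \autoref{lemma_convex_in_grid} on the removed corner rectangle, but it silently uses that the \emph{center} of a $(2x+1)\times(2y+1)$ grid survives to depth $\Theta(x^{2/3}y^{2/3})$; \autoref{lemma_H:L:rectangle} only counts layers, so you need the extra (easy) observation that peeling preserves central symmetry and a centrally symmetric set never has its center as a hull vertex, hence the center is the last point removed. Note also that this route consumes the \emph{lower}-bound half of \autoref{lemma_H:L:rectangle}, i.e.\ the hard active-vector machinery, to prove the \emph{easy} half of part 2, whereas the paper gets it from the elementary \Jarnik bound alone. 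Second, your arguments for the $O(n^{1/2}\log n)$ bound on $|L_n|$ (dyadic boxes of area $O(n^{3/2})$) and for $|L_n|=\Omega(\log n)$ (consecutive boundary vertices in the bulk have $x$-coordinates within a bounded ratio, since a longer chord would overshoot the upper hyperbola at its midpoint) are correct and match the paper's, but the latter again presupposes the missing lower bound of part 2.
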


In other words, the boundary $B_n$ is sandwiched between two hyperbolas that are separated from each other by a constant factor, where the  upper hyperbola bounds $B_n$ for all $x$, while the lower hyperbola bounds $B_n$ only up to $x = c_1 n$ (and symmetrically in the $y$-axis). Put differently, the scaled boundary $B'_n$ is sandwiched between two hyperbolas $y \ge c_1 x$ and $y\le c_2 x$ that are independent of $n$, where the lower hyperbola bounds $B'_n$ only up to $x = c_1 n^{1/4}$ (and symmetrically in the $y$-axis). See \autoref{fig_2_hyperbolas}.

Regarding $|L_n|$, we would expect it to behave like $\Theta(n^{1/2} \log n)$, or even like $c' n^{1/2} \log n \pm o(n^{1/2} \log n)$ for some constant $c'$. However, our rigorous lower bound for $|L_n|$ is very weak.

\subsection{Proof of \autoref{theorem_peel_corner}}

The proof of \autoref{theorem_peel_corner} is mainly based on the
techniques of~\cite{hl-pg-13}.

\begin{lemma}[\Jarnik~\cite{j-ugkk-26}]%
    \label{lemma_convex_in_grid}%
    Let $P\subset \{1,\ldots,m\}\times\{1,\ldots,n\}$ be in convex
    position. Then $|P| = O((mn)^{1/3})$.
\end{lemma}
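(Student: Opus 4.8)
The plan is to count \emph{edges} rather than points: since $P$ is in convex position, every point of $P$ is a vertex of $\CH(P)$, so $|P|$ equals the number of edges of $\CH(P)$. First I would break the boundary of $\CH(P)$ into the at most four maximal arcs on which the edge vectors keep a fixed pair of coordinate signs, discarding the $O(1)$ axis-parallel edges; since these arcs together carry all the edges, it suffices to bound the number of edges on one of them, say the arc whose edge vectors $v_1,\dots,v_\ell$ point up and to the right. Write $v_i=(a_i,b_i)$ with $a_i,b_i\ge 1$. Two facts then pin down the situation. By convexity the outward normals turn strictly monotonically, so the directions of the $v_i$ are pairwise distinct, and hence the $v_i$ are distinct lattice vectors. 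And because the arc stays inside the box, its total horizontal and vertical extents give $\sum_i a_i\le m$ and $\sum_i b_i\le n$ by telescoping. Thus everything reduces to a lattice packing question: how many \emph{distinct} positive lattice vectors can one choose so that the sum of their $x$-coordinates is at most $m$ and the sum of their $y$-coordinates is at most $n$?

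The key estimate I would establish is the symmetric inequality
\[
  \Bigl(\sum_i a_i\Bigr)\Bigl(\sum_i b_i\Bigr)=\Omega\bigl(\ell^{3}\bigr)
\]
for any $\ell$ distinct lattice points $(a_i,b_i)$ with positive coordinates. Granting it, the two box constraints give $mn=\Omega(\ell^3)$, so $\ell=O\bigl((mn)^{1/3}\bigr)$; summing over the four arcs then yields $|P|=O\bigl((mn)^{1/3}\bigr)$, which is exactly \Jarnik's bound. To prove the inequality I would use an iterated compression argument: coordinate-wise compression (replacing, in each column, the $y$-values present by $\{1,2,\dots\}$, and symmetrically for rows) preserves the cardinality and the distinctness of the points while never increasing either coordinate sum, and it drives the configuration to a down-closed (staircase) set. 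It therefore suffices to verify the bound for staircase sets, where the product is minimized up to constants by a square block $\{1,\dots,r\}^2$ with $r\approx\sqrt\ell$, for which each coordinate sum is $\Theta(r^3)=\Theta(\ell^{3/2})$ and the product is $\Theta(\ell^3)$.

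The main obstacle is precisely this counting estimate: I must show that no staircase configuration makes the product of the two coordinate sums asymptotically smaller than a square block does. In weighted form the statement reads $\sum_i(a_i+\lambda b_i)=\Omega(\sqrt\lambda\,\ell^{3/2})$ for every $\lambda>0$, obtained by comparing against the cheapest lattice points inside the triangle $a+\lambda b\le s$; there the delicate point is keeping the area-versus-lattice-point count honest when $\lambda$ is far from $1$, that is when the box is very elongated. Working instead with the symmetric product form sidesteps this difficulty, because the compression step is exact and coordinate-symmetric and so no extreme aspect ratio needs separate treatment. The remaining task is then the self-contained estimate that every staircase of $\ell$ cells satisfies $\bigl(\sum a_i\bigr)\bigl(\sum b_i\bigr)=\Omega(\ell^3)$, and that is where I would spend the real effort.
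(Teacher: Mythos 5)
Your opening reduction is correct and coincides with the paper's first step: pass to the edge vectors of $\CH(P)$, which are pairwise distinct lattice vectors, and use the box to bound their coordinate sums ($\sum|a_i|\le 2m$, $\sum|b_i|\le 2n$). From there the two arguments diverge completely. The paper finishes with a three-way classification by \emph{size}: setting $\xmax=m^{2/3}n^{-1/3}$ and $\ymax=n^{2/3}m^{-1/3}$, the distinct vectors with $|x_i|\le\xmax$ and $|y_i|\le\ymax$ number at most $4\xmax\ymax=O((mn)^{1/3})$, those with $|x_i|>\xmax$ number at most $2m/\xmax$, and those with $|y_i|>\ymax$ number at most $2n/\ymax$ --- no extremal lemma about lattice configurations is needed. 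You instead reduce everything to the inequality $\bigl(\sum a_i\bigr)\bigl(\sum b_i\bigr)=\Omega(\ell^3)$ for distinct positive lattice vectors, which is a stronger, reusable statement (it is essentially the extremal content of \Jarnik's theorem), but your write-up does not prove it: the compression step is fine, yet the staircase case is only asserted via the heuristic that a square block is the near-minimizer, and you explicitly defer the ``real effort.'' That deferred estimate is the entire content of the lemma, so as it stands the proposal is a reduction, not a proof. It is also less routine than your sketch suggests --- for instance, the hyperbolic down-set $\{ab\le s\}$ defeats several natural Cauchy--Schwarz/Chebyshev attacks on the staircase case, even though it does satisfy the inequality.

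The good news is that your key inequality is true and has a short proof, and it is precisely the weighted form you were trying to avoid --- your worry about elongated triangles is unfounded because you only ever need an \emph{upper} bound on the lattice-point count, never a matching lower bound. For any $\lambda>0$, the number of $(a,b)\in\Z_{\ge1}^2$ with $a+\lambda b\le s$ is at most $s^2/(2\lambda)$; taking $s=\sqrt{\lambda\ell}$ shows that at least $\ell/2$ of your $\ell$ distinct vectors satisfy $a_i+\lambda b_i>\sqrt{\lambda\ell}$, whence $A+\lambda B\ge\tfrac12\sqrt{\lambda}\,\ell^{3/2}$ for every $\lambda>0$, where $A=\sum a_i$ and $B=\sum b_i$. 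Substituting $\lambda=A/B$ gives $2\sqrt{AB}\ge\tfrac14\ell^{3/2}\cdot 2$, i.e.\ $AB\ge\ell^3/16$, with no compression and no case analysis. With that two-line lemma inserted, your proof is complete and arguably yields more than the paper's (you get the clean product inequality as a standalone fact); without it, the argument has a genuine hole exactly where you said the effort would go.
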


\begin{proof}
    Let $p_0, p_1, p_2 \ldots, p_{k-1}$ be the points of $P$ listed in
    circular order around the boundary of $\CH{(P)}$, and let
    $v_i = p_{(i+1)\bmod k} - p_i$ be the vectors corresponding to the
    edges of $\CH{(P)}$. Note that these vectors are pairwise
    distinct. Let $\xmax = m^{2/3}n^{-1/3}$ and
    $\ymax = n^{2/3}m^{-1/3}$. Classify the vectors $v_i=(x_i,y_i)$
    into three types as follows: (1) Those satisfying
    $|x_i| \le \xmax$ and $|y_i| \le \ymax$; (2) those satisfying
    $|x_i| > \xmax$; (3) the remaining ones (which satisfy
    $|y_i| > \ymax$). The number of vectors of type (1) is at most
    $4\xmax\ymax = O((mn)^{1/3})$. The number of vectors of type (2)
    is at most $2m/\xmax = O((mn)^{1/3})$. And the number of vectors of
    type (3) is at most $2n/\ymax = O((mn)^{1/3})$.
\end{proof}

A vector $v=(x,y)\in \Z^2$ is said to be \emph{primitive} if $x$ and
$y$ are relatively prime.

\begin{lemma}\label{lem_rp}
    Let
    $M = \{a+1, \ldots, a+m\} \times \{b+1, \ldots, b+n\} \subseteq
    \{1, \ldots, N\}^2$. Then the number of primitive vectors in $M$
    is $(6/\pi^2)mn \pm O(N\log N)$.
\end{lemma}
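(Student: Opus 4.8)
The plan is to count the coprime pairs by \Mobius inversion combined with a truncated divisor sum. For a positive integer $d$, let $N_d$ denote the number of points $(x,y)\in M$ with $d\mid x$ and $d\mid y$. Since $\sum_{d\mid g}\mu(d)=[g=1]$, applying this with $g=\gcd(x,y)$, summing over all $(x,y)\in M$, and exchanging the order of summation yields
\begin{equation*}
\#\{\text{primitive vectors in }M\}=\sum_{(x,y)\in M}\sum_{d\mid\gcd(x,y)}\mu(d)=\sum_{d\ge 1}\mu(d)\,N_d.
\end{equation*}
Because every coordinate appearing in $M$ lies in $\{1,\ldots,N\}$, a nonzero multiple of any $d>N$ would be at least $d>N$ and so cannot occur; hence $N_d=0$ for $d>N$, and the sum truncates at $d=N$.

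Next I would estimate each $N_d$. The number of multiples of $d$ in the length-$m$ interval $\{a+1,\ldots,a+m\}$ is $\lfloor(a+m)/d\rfloor-\lfloor a/d\rfloor=m/d+O(1)$, and likewise the count in the $y$-interval is $n/d+O(1)$. Since $N_d$ is the product of these two one-dimensional counts,
\begin{equation*}
N_d=\Bigl(\tfrac{m}{d}+O(1)\Bigr)\Bigl(\tfrac{n}{d}+O(1)\Bigr)=\frac{mn}{d^2}+O\!\left(\frac{m+n}{d}\right)+O(1).
\end{equation*}

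Substituting this into the truncated sum and separating the main term from the error, the main term is $mn\sum_{d=1}^{N}\mu(d)/d^2$. Using the classical identity $\sum_{d\ge 1}\mu(d)/d^2=1/\zeta(2)=6/\pi^2$ together with the tail bound $\bigl|\sum_{d>N}\mu(d)/d^2\bigr|\le\sum_{d>N}1/d^2=O(1/N)$, this equals $(6/\pi^2)mn+O(mn/N)$. For the error term, $|\mu(d)|\le 1$ gives $\sum_{d=1}^{N}\bigl(O((m+n)/d)+O(1)\bigr)=O\bigl((m+n)\log N\bigr)+O(N)$. Finally I would invoke the hypothesis $M\subseteq\{1,\ldots,N\}^2$, which forces $m,n\le N$: this turns $O(mn/N)$ into $O(N)$ and $O((m+n)\log N)$ into $O(N\log N)$, so every error contribution collapses into the claimed $O(N\log N)$. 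The one point deserving care — and the reason the bound carries a logarithmic factor rather than being merely $O(N)$ — is precisely the truncation: the full harmonic sum $\sum_d 1/d$ diverges, and it is the observation $N_d=0$ for $d>N$ that both cuts the sum off at $N$ and produces the single factor of $\log N$.
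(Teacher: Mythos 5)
Your proof is correct, and it reaches the same error bound $O(N\log N)$ by the same underlying mechanism as the paper: \Mobius inversion over $d\mid\gcd(x,y)$, truncation of the divisor sum, the tail bound for $\sum_d \mu(d)/d^2$, and the harmonic sum producing the logarithm. The only structural difference is in the bookkeeping for the translated rectangle: the paper first proves the estimate for anchored rectangles $\{1,\ldots,m\}\times\{1,\ldots,n\}$ (its Lemma~\ref{lemma_r:p_anchored}) and then expresses the count in $M$ as the four-term inclusion--exclusion $\rho(a+m,b+n)-\rho(a+m,b)-\rho(a,b+n)+\rho(a,b)$, whereas you run the \Mobius computation directly on $M$ by counting multiples of $d$ in each shifted interval via $\lfloor(a+m)/d\rfloor-\lfloor a/d\rfloor=m/d+O(1)$. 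Your version is slightly more self-contained (no separate anchored lemma needed, and the truncation at $d=N$ is cleanly justified by $N_d=0$ for $d>N$), while the paper's version reuses a classical statement essentially verbatim from Hardy and Wright; both are complete and correct.
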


\begin{proof}
We start with the following classical number-theoretical result.

\begin{lemma}\label{lemma_r:p_anchored}
    Let $m,n$ be positive integers with $m\le n$. Let $\rho(m,n)$ be
    the number of primitive vectors $(x,y)$ in
    $\{1, \ldots, m\}\times\{1, \ldots, n\}$. Then
    $\rho(m,n) = (6/\pi^2)mn \pm O(n\log n)$.
\end{lemma}

\begin{proof}
    (Following Hardy and Wright~\cite{hw-itn-08}, Theorem 332.) Let
    $\mu$ be the \Mobius function, which sets $\mu(x) = -1$ if $x$ is
    square-free and has an odd number of prime factors, $\mu(x) = 1$
    if $x$ is square-free and has an even number of prime factors, and
    $\mu(x) = 0$ if $x$ is not square-free. Let $D(x,y)$ be the set of
    all common divisors of $x$ and $y$. Then
    $\sum_{d \in D(x,y)} \mu(d)$ equals $1$ if $x$ and $y$ are
    relatively prime, and $0$ otherwise.

    Clearly, $\sum_{x=1}^\infty \mu(x)/x^2$ converges to some positive
    real number smaller than $1$. In fact, it converges to
    $6/\pi^2$~\cite{hw-itn-08}.

    Therefore,
    \begin{multline*}
        \rho(m,n) = \sum_{x=1}^m\sum_{y=1}^n \sum_{d\in D(x,y)} \mu(d) = \sum_{d=1}^{mn} \mu(d) \left\lfloor \frac{m}{d} \right\rfloor \left\lfloor\frac{n}{d}\right\rfloor = \sum_{d=1}^{mn} \mu(d)\left(\frac{mn}{d^2} - O\left(\frac{n}{d}\right)\right)\\
        = mn\sum_{d=1}^{mn} \frac{\mu(d)}{d^2} \pm O(n\log n)
        =mn\left(\sum_{d=1}^\infty \frac{\mu(d)}{d^2} \pm
            O\left(\frac{1}{mn}\right)\right) \pm O(n\log n),
    \end{multline*}
    and the claim follows.
\end{proof}

Now, consider $M = \{a+1, \ldots, a+m\} \times \{b+1, \ldots, b+n\} \subseteq \{1, \ldots, N\}^2$. The number of primitive vectors in $M$ equals $\rho(a+m,b+n) - \rho(a+m,b) - \rho(a,b+n) + \rho(a,b)$, so \autoref{lem_rp} follows by \autoref{lemma_r:p_anchored}.
\end{proof}

\subsubsection{Upper bounds}

\begin{lemma}%
    \label{lem_upper_bd_a}%
    We have $a_x(n)\le c_0n^{3/2}/x$ for some constant $c_0$.
\end{lemma}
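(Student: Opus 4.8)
The plan is to bound the number of grid points removed from column $x$ up to iteration $n$ by a counting argument on the layers. The key observation is that each layer $L_k$, being a set of points in convex position inside the region currently being peeled, can contribute only a limited number of vertices to any single column; in fact, within one column each $L_k$ contributes at most one vertex. So $a_x(n)$ counts the number of distinct layers (among the first $n$) whose vertex set meets column $x$. The main idea is to show that the point of column $x$ removed at iteration $k$ sits at height roughly proportional to $k^{3/2}/x$, and then invert this relation.

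First I would fix a column $x$ and consider the topmost removed point $(x, a_x(n)-1)$ at iteration $n$ (using the convention that points are removed bottom-up, so $a_x(n)$ points means heights $0,\dots,a_x(n)-1$ are gone). The strategy is to find, for this point $p=(x,a_x(n)-1)$, a rectangle $M$ of the form $\{1,\dots,x\}\times\{1,\dots,a_x(n)-1\}$ (or a suitable sub-rectangle anchored near $p$) all of whose points have been removed in at most $n$ layers. Since each layer removes points in convex position, I would invoke \autoref{lemma_convex_in_grid}: a convex chain inside an $m\times n'$ box has $O((mn')^{1/3})$ points. The point is that the $n$ layers peeled so far must collectively account for every grid point of this rectangle, and the total count of such points is at least the area $\asymp x\cdot a_x(n)$.

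The crux of the argument is the following dual estimate: the number of grid points that can be removed in the first $n$ layers, restricted to a box of dimensions $x\times h$ with $h \le$ (something), is at most $\sum_{k=1}^n (\text{vertices of } L_k \text{ inside the box})$, and each such term is $O((x\cdot h)^{1/3})$ by \Jarnik's bound. Summing over $n$ layers gives an upper bound of $O(n\,(xh)^{1/3})$ removed points in the box. On the other hand, if all $x\cdot h$ points of the box have been removed, we need $x\cdot h = O(n\,(xh)^{1/3})$, i.e. $(xh)^{2/3} = O(n)$, giving $xh = O(n^{3/2})$ and hence $h = a_x(n) = O(n^{3/2}/x)$. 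This is exactly the claimed bound, with the constant $c_0$ emerging from the constants in \autoref{lemma_convex_in_grid}.

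The main obstacle I anticipate is making rigorous the claim that the relevant rectangle is \emph{entirely} removed within the first $n$ layers, and that the layers lying inside the box are genuinely convex chains confined to a box of the stated dimensions (so that \Jarnik's bound applies with the right parameters). The subtlety is that a layer $L_k$ is a convex polygon for the whole current set $G_{k-1}$, not for the box, so I must argue that the portion of $L_k$ lying in the box $\{1,\dots,x\}\times\{1,\dots,h\}$ is still a convex chain in that box — which it is, being an arc of a convex polygon — so \Jarnik's $O((xh)^{1/3})$ bound applies to that arc. A second delicate point is choosing $h$ correctly relative to $x$: the bound should be set up so that the box captures all removed points of column $x$ while keeping its dimensions balanced enough that $(xh)^{1/3}$ is the right order; I expect to take the box to have width $\Theta(x)$ and height $\Theta(a_x(n))$, and to verify that these removed points indeed all lie in a box whose lower-left region has been fully peeled, so that the area-versus-vertex-count inequality closes.
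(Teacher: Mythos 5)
Your proposal is correct and is essentially the paper's own argument: the paper observes that the corner subgrid $\{0,\ldots,x-1\}\times\{0,\ldots,a_x(n)-1\}$ has been entirely removed by iteration $n$ (which follows from the monotonicity of $a_x(n)$ in $x$ and the bottom-up removal within each column, as noted in the Notation paragraph), applies \autoref{lemma_convex_in_grid} to bound each layer's contribution to that box by $O((xy)^{1/3})$, and concludes $xy\le O(n(xy)^{1/3})$, hence $a_x(n)=O(n^{3/2}/x)$. The "obstacles" you flag (full removal of the rectangle, convexity of the restricted arc) are exactly the points the paper handles implicitly, and they go through as you expect.
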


\begin{proof}
    Given $x$, let $y=a_x(n)$. By iteration $n$, the entire corner subgrid
    $G'=\{0,\ldots,x-1\}\times\{0,\ldots,y-1\}$ has been removed. By
    \autoref{lemma_convex_in_grid}, each $L_i$ contains $O((xy)^{1/3})$
    points of $G'$. Hence, we must have $xy \le O(n(xy)^{1/3})$, which
    implies $y = O(n^{3/2}/x)$.
\end{proof}

\begin{corollary}
    We have $s(n) = O(n^{3/2} \log n)$.
\end{corollary}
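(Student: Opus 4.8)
The plan is to sum the per-column upper bound from \autoref{lem_upper_bd_a} over all columns and recognize the resulting harmonic series. Since every removed point lies in exactly one column, we have $s(n) = \sum_{x\ge 0} a_x(n)$. The notation established above tells us that the sequence $a_0(n), a_1(n), a_2(n), \ldots$ is nonincreasing with $a_{n-1}(n) = 1$ and $a_n(n) = 0$; hence only the columns $x = 0, 1, \ldots, n-1$ contribute, and the sum is finite.

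Next I would separate the $x = 0$ term, which equals $a_0(n) = n$ exactly, from the remaining terms. For each $x \ge 1$, \autoref{lem_upper_bd_a} gives $a_x(n) \le c_0 n^{3/2}/x$. Substituting,
\[ s(n) \le n + \sum_{x=1}^{n-1} \frac{c_0 n^{3/2}}{x} = n + c_0 n^{3/2} \sum_{x=1}^{n-1} \frac{1}{x}. \]
The inner sum is the $(n-1)$\th harmonic number, which is $O(\log n)$, so the second term dominates and we obtain $s(n) = O(n^{3/2}\log n)$, as claimed.

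There is essentially no substantive obstacle here: the argument is a direct summation of the column bound against the harmonic series. The only point deserving a moment's care is confirming that columns of index $x \ge n$ contribute nothing, so that the summation range is exactly $0 \le x \le n-1$; this is immediate from $a_n(n) = 0$ together with the monotonicity of $a_x(n)$ in $x$. I would also note that the matching lower bound $s(n) = \Omega(n^{3/2}\log n)$, needed for the full $\Theta$ claim of part (1) of \autoref{theorem_peel_corner}, does \emph{not} follow from this computation and will instead rely on the lower bound for $a_x(n)$ asserted in part (2).
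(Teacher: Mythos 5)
Your proof is correct and is exactly the argument the paper intends (the paper states this corollary without proof as an immediate consequence of \autoref{lem_upper_bd_a}): sum the column bound $a_x(n)\le c_0n^{3/2}/x$ over $x=1,\dots,n-1$, handle $x=0$ separately, and recognize the harmonic series. Your added care about the summation range and the remark that the matching lower bound requires a separate argument are both accurate.
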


\begin{corollary}\label{cor_Kn}
    We have $K_n \le \sqrt{c_0}n^{3/4}$ for the constant $c_0$ of
    \autoref{lem_upper_bd_a}.
\end{corollary}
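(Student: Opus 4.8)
The plan is to convert the geometric fact that $(K_n,K_n)$ lies on $B_n$ into a lower bound on $a_x(n)$ for a column $x$ close to $K_n$, and then read off the desired inequality by feeding this lower bound into the upper bound $a_x(n)\le c_0 n^{3/2}/x$ of \autoref{lem_upper_bd_a}. Intuitively, since the point $(K_n,K_n)$ sits on the boundary of $\CH(G_{n-1})$, the entire corner region below and to the left of it has already been peeled away, so in any column $x$ with $x\le K_n$ the lowest surviving grid point must lie at height roughly $K_n$; this forces $a_x(n)$ to be large, while the lemma caps it at $c_0n^{3/2}/x$.

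First I would exploit the reflective symmetry of $G=\N^2$ about the diagonal $y=x$. This symmetry is inherited by $G_{n-1}$, by $H_n=\CH(G_{n-1})$, and by $B_n$, so the normal cone of $H_n$ at the diagonal point $(K_n,K_n)$ is symmetric about the direction $(1,1)$. Being a nonempty convex cone symmetric about $(1,1)$, it contains the ray in direction $-(1,1)$, and hence the line $x+y=2K_n$ is a supporting line of $H_n$ at $(K_n,K_n)$. Consequently $H_n\subseteq\{(x,y):x+y\ge 2K_n\}$, and in particular every surviving grid point of $G_{n-1}$ satisfies $x+y\ge 2K_n$.

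Now fix an integer $x$ with $0<x\le K_n$. In column $x$ the lowest point of $G_{n-1}$ sits at height $a_x(n-1)$, and by the supporting line this height is at least $2K_n-x$; since $a_x(n)\ge a_x(n-1)$, we obtain $a_x(n)\ge 2K_n-x$. Combining with \autoref{lem_upper_bd_a} gives $(2K_n-x)\,x\le c_0 n^{3/2}$. The left-hand side, viewed as a function of a real variable $x$, is maximized at $x=K_n$ with value $K_n^2$; choosing the integer $x=\lfloor K_n\rfloor$ therefore yields $K_n^2\le c_0n^{3/2}$ up to a lower-order additive term, i.e.\ $K_n\le\sqrt{c_0}\,n^{3/4}$.

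The main obstacle is the justification that $x+y=2K_n$ is genuinely a supporting line: one must treat both the case where $(K_n,K_n)$ lies in the relative interior of an edge of $B_n$ (where the edge itself is the unique supporting line and is perpendicular to the diagonal by symmetry) and the case where $(K_n,K_n)$ is a vertex (where the admissible supporting slopes form a cone symmetric about $-1$ and hence include $-1$). The only other point needing care is the integrality of the optimizing column: the continuous optimum $x=K_n$ need not be an integer, but the resulting discrepancy is of lower order and is absorbed into the constant $c_0$, which \autoref{lem_upper_bd_a} already pins down only up to the hidden constant in its $O$-bound.
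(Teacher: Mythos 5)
Your proof is correct in substance but takes a genuinely different (and longer) route than the paper's. The paper's proof is a one-liner running in the opposite direction: it sets $x_0=\sqrt{c_0}\,n^{3/4}$ and observes that \autoref{lem_upper_bd_a} gives $a_{x_0}(n)\le c_0n^{3/2}/x_0=x_0$, so column $x_0$ still contains a surviving point at height at most $x_0$; that point and its mirror image across the diagonal both lie in $H_n$, so the segment joining them puts a point of $H_n$ on the diagonal at coordinate at most $x_0$, i.e.\ $K_n\le x_0$. You instead argue ``inside-out'': you use the diagonal symmetry to produce the supporting line $x+y=2K_n$ at $(K_n,K_n)$, deduce $a_x(n)\ge 2K_n-x$ for every column $x\le K_n$, and then play this lower bound off against \autoref{lem_upper_bd_a} at $x=\lfloor K_n\rfloor$. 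Both arguments use exactly the same two ingredients (\autoref{lem_upper_bd_a} plus the reflection symmetry of $\N^2$), but yours needs the harder direction of the symmetry argument---the normal-cone facts you flag (the cone is pointed because $H_n$ has nonempty interior, and its axis points along $-(1,1)$ because $H_n$ contains the far end of the diagonal)---whereas the paper only needs the trivial direction, namely that a surviving point on or below the diagonal caps $K_n$ from above. One small mismatch to note: your route yields $K_n(K_n-1)\le c_0n^{3/2}$, hence $K_n\le\sqrt{c_0}\,n^{3/4}+O(1)$ rather than the exact inequality stated. This is harmless, since $c_0$ is only determined up to the implied constant of \autoref{lem_upper_bd_a} and the downstream uses of the corollary tolerate enlarging it, but as written your bound is marginally weaker than the paper's, which incurs no such loss.
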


\begin{proof}
    Take $x_0 = \sqrt{c_0} n^{3/4}$, and note that
    $a_{x_0}(n) \le x_0$.
\end{proof}

By \autoref{cor_Kn}, each ``arm'' of $L_n$ is contained in an
$O(n^{3/4})\times n$ box. Hence, a hasty application of
\autoref{lemma_convex_in_grid} would yield $|L_n| =
O(n^{7/12})$. However, we can do better: We can cover each arm of
$L_n$ by logarithmically many boxes of small area, and apply
\autoref{lemma_convex_in_grid} on each box.

\begin{lemma}
    We have $|L_n| = O(n^{1/2}\log n)$.
\end{lemma}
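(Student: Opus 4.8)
The plan is to turn the hyperbolic upper bound $a_x(n)=O(n^{3/2}/x)$ of \autoref{lem_upper_bd_a} into a covering of each arm of $L_n$ by a logarithmic number of axis-parallel boxes, all of the \emph{same} area $\Theta(n^{3/2})$, and then to bound the number of vertices inside each box by \autoref{lemma_convex_in_grid}. The point is that a convex chain confined to a box of area $A$ has only $O(A^{1/3})$ vertices, so equal-area boxes each contribute the same bound, and the total is simply (number of boxes) times (vertices per box). By symmetry it suffices to count vertices on the arm with $x\ge y$.

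First I would fix the range of $x$-coordinates swept out by this arm. Its lower end is $x\approx K_n=\Theta(n^{3/4})$ by \autoref{cor_Kn}. Its upper end is $x<n$: every point of $L_n$ is removed by iteration $n$, so it lies in a column $x$ with $a_x(n)\ge 1$, and since $a_n(n)=0$ and $a_x(n)$ is nonincreasing, no such point can have $x\ge n$. Thus the arm lives in the $x$-range $[K_n,n)$, whose endpoints differ by the factor $n/K_n=\Theta(n^{1/4})$. For $j=0,1,\ldots,O(\log n)$ I define the box $B_j$ to consist of the grid points with $x\in[K_n2^j,\,K_n2^{j+1})$ and $0\le y\le c_0 n^{3/2}/(K_n2^j)$. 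Because $a_x(n)$ is nonincreasing in $x$, any point $(x,y)\in L_n$ with $x\in[K_n2^j,K_n2^{j+1})$ satisfies $y<a_x(n)\le a_{K_n2^j}(n)\le c_0n^{3/2}/(K_n2^j)$, so it lies in $B_j$; hence the boxes cover the whole arm, and there are $\log_2\!\bigl(\Theta(n^{1/4})\bigr)=O(\log n)$ of them.

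The key computation is the area. Box $B_j$ has width $m_j=K_n2^j$ and height $h_j=c_0n^{3/2}/(K_n2^j)$, so $m_jh_j=c_0n^{3/2}$, independent of $j$. The vertices of $L_n$ inside $B_j$ are a subset of the vertices of the convex polygon $H_n$ and are therefore in convex position, so \autoref{lemma_convex_in_grid} applies and gives $O\bigl((m_jh_j)^{1/3}\bigr)=O\bigl((n^{3/2})^{1/3}\bigr)=O(n^{1/2})$ vertices per box. Summing over the $O(\log n)$ boxes bounds this arm by $O(n^{1/2}\log n)$ vertices, and the same bound holds for all of $L_n$ by symmetry.

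I expect no genuine obstacle beyond bookkeeping of constants; the only substantive insight is that the hyperbolic decay of $a_x(n)$ is exactly balanced by the geometric growth of the box widths, which is what makes every box have the same area and hence the same $O(n^{1/2})$ contribution. One should also verify that the box dimensions remain at least a constant, so that the estimate of \autoref{lemma_convex_in_grid} is meaningful throughout: the smallest dimension is the height $h_{j}\approx n^{1/2}$ of the last box (where $2^j\approx n^{1/4}$, $m_j\approx n$), which is comfortably large.
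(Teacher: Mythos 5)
Your proof is correct and takes essentially the same approach as the paper: both cover each arm of $L_n$ by $O(\log n)$ dyadic axis-parallel boxes of equal area $\Theta(n^{3/2})$, justified by \autoref{lem_upper_bd_a}, and apply \autoref{lemma_convex_in_grid} to each box. (One small nitpick: \autoref{cor_Kn} gives only the upper bound $K_n\le\sqrt{c_0}\,n^{3/4}$, not $K_n=\Theta(n^{3/4})$, but your argument does not actually need the lower bound on $K_n$, since the number of boxes is $O(\log(n/K_n))=O(\log n)$ regardless.)
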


\begin{proof}
    Let $x_0 = \sqrt{c_0} n^{3/4}$ for the constant $c_0$ of
    \autoref{lem_upper_bd_a}, and recall that $K_n \le x_0$. Define
    the axis-parallel boxes $T_i$,
    $0\le i\le \lceil\log_2 (n/x_0)\rceil$, by $T_{0} = [0,x_0]^2$ and
    $T_i = [2^{i-1}x_0,2^i x_0]\times [0,x_0 / 2^i]$ for $i\ge 1$. By
    \autoref{lem_upper_bd_a}, the right arm of $L_n$ is contained in
    the union of these boxes. Furthermore, the area of $T_{0}$ is
    $c_0 n^{3/2}$, and the area of each $T_i$, $i\ge 1$, is
    $c_0 n^{3/2}/2$. Hence, by \autoref{lemma_convex_in_grid}, each
    $T_i$ contains $O(\sqrt{n})$ points of $L_n$. Finally, the number
    of boxes is $O(\log n)$.
\end{proof}

\subsubsection{Lower bounds}

\begin{figure}
    \centerline{\includegraphics{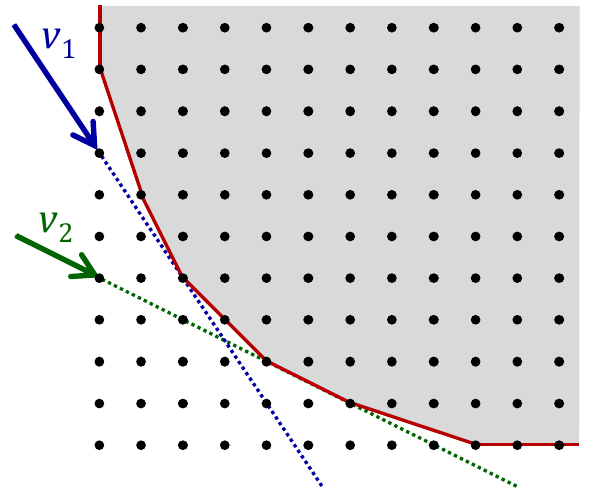}}
    \caption{\label{fig_active}At the shown iteration, vector
       $v_1=(2,-3)$ is inactive, while vector $v_2=(2,-1)$ is active.}
\end{figure}

Let $v=(x_v,-y_v)$ be a primitive vector with $x_v,
y_v>0$. Following~\cite{hl-pg-13}, we say that $v$ is \emph{active} at
iteration $n$ if the unique line $\ell_v$ parallel to $v$ that is
tangent to $H_n$ contains an edge of $H_n$ (and so $\ell_v$ contains
two points of $L_n$). Otherwise, if $\ell_v$ contains a single vertex
of $H_n$ (and a single point of $L_n$), then we say that $v$ is
\emph{inactive} at iteration $n$. See \autoref{fig_active}.

Given such a vector $v$, let $\mathcal L_v$ be the set of lines
parallel to $v$ that pass through points of $\N^2$. We say that a line
$\ell \in \mathcal L_v$ is \emph{alive} at iteration $n$ if $\ell$
intersects $H_n$; otherwise, we say that $\ell$ is \emph{dead} at
iteration $n$. Note that, at a given iteration $n$, all the dead lines
of $\mathcal L_v$ lie below all its live lines.

\begin{lemma}\label{lem_Lv_rect}
    Let $v=(x_v,-y_v)$ be a primitive vector with $x_v, y_v>0$. Then
    the number of lines of $\mathcal L_v$ that pass below
    the point $(x,y)$ is at most $xy_v + yx_v$.
\end{lemma}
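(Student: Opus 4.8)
The plan is to label each line of $\mathcal{L}_v$ by a single integer and then count. I would introduce the linear functional $\phi(p_1,p_2) = y_v p_1 + x_v p_2$. Since $\phi(v)=\phi(x_v,-y_v)=y_vx_v-x_vy_v=0$, the functional $\phi$ is constant along every line parallel to $v$, so each line $\ell\in\mathcal{L}_v$ carries a well-defined value $\phi(\ell)$, and two lines of $\mathcal{L}_v$ coincide iff they share this value. Because every line of $\mathcal{L}_v$ passes through some point $(X,Y)\in\N^2$, its value $\phi(\ell)=y_vX+x_vY$ is a nonnegative integer; thus $\ell\mapsto\phi(\ell)$ is an injection from $\mathcal{L}_v$ into $\Z_{\ge 0}$, with the nonnegativity coming from $x_v,y_v>0$ together with the first-quadrant restriction (the minimum value $0$ being attained at the corner $(0,0)$).

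Next I would translate the phrase ``passes below''. The vector $v$ points down and to the right, so $\phi$ increases in the up-and-to-the-right direction (its gradient is the normal $(y_v,x_v)$), and a line of $\mathcal{L}_v$ passes below the point $(x,y)$ precisely when $\phi(\ell)<\phi(x,y)=xy_v+yx_v$. Combining this with $\phi(\ell)\ge 0$, the lines below $(x,y)$ map injectively to distinct integers lying in the half-open interval $[0,\,xy_v+yx_v)$. The number of integers in that interval is at most $xy_v+yx_v$, which gives the claimed bound. (When $(x,y)$ is a lattice point, $\phi(x,y)$ is itself an integer and the count of integers in $[0,\phi(x,y))$ equals exactly $\phi(x,y)$; this is the only case needed in the later lower-bound arguments.)

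The only genuine content lies in fixing the two orientation conventions correctly: that $\phi\ge 0$ on all of $\N^2$, and that ``below'' corresponds to the \emph{strict} inequality $\phi(\ell)<\phi(x,y)$. Once these are pinned down the count is immediate, so the main obstacle is really careful boundary bookkeeping — in particular whether the line through $(x,y)$ itself (of value exactly $\phi(x,y)$) is excluded, and the harmless rounding to $\lceil\phi(x,y)\rceil$ in the event that $(x,y)$ is not a lattice point. I would also note that primitivity of $v$ is what forces the integer labels to step by consecutive values, so the bound is essentially tight: it is attained when every integer in $[0,\phi(x,y))$ is realized by an $\N^2$ point (e.g.\ when $x_v=1$ or $y_v=1$), while for ``fatter'' $v$ some integers are Frobenius gaps realized by no line of $\mathcal{L}_v$, and the true count drops strictly below $xy_v+yx_v$, consistent with the stated inequality. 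As an alternative one could send each below-line to its lattice point of least $x$-coordinate, which lies in the triangle $\{0\le X<x_v,\ Y\ge 0,\ \phi(X,Y)<\phi(x,y)\}$ and whose lattice points number at most $xy_v+yx_v$; this yields the same estimate, but the linear-functional count above is the cleaner route.
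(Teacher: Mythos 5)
Your proof is correct. It differs from the paper's in the choice of injection: the paper observes that each line of $\mathcal L_v$ passing below $(x,y)$ contains a unique grid point with $x$-coordinate in $\{0,\ldots,x_v-1\}$ (consecutive lattice points on such a line are $x_v$ apart horizontally), and that this representative lies in the box $\{0,\ldots,x_v-1\}\times\{0,\ldots,y+\lceil xy_v/x_v\rceil\}$, so it counts lattice points in a two-dimensional box; you instead label each line by the integer value of the normal functional $\phi(p_1,p_2)=y_vp_1+x_vp_2$ and count integers in the interval $[0,xy_v+yx_v)$. Both are one-step injections, and you even mention the paper's variant as your ``alternative'' at the end. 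Your version is arguably the cleaner of the two: the functional count yields exactly $xy_v+yx_v$ when $(x,y)$ is a lattice point (the only case used later), whereas the paper's box literally contains $x_v\bigl(y+\lceil xy_v/x_v\rceil+1\bigr)$ points, which overshoots the stated bound by a lower-order additive term that the paper silently absorbs. Your side remarks on Frobenius gaps and on primitivity are inessential (primitivity is not needed for the upper bound, only for $\mathcal L_v$ to be the intended family), but they do not affect correctness.
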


\begin{proof}
    Each of the said lines passes through a grid point in $\{0, \ldots, x_v-1\}\times\{0, \ldots, y + \lceil x y_v/x_v\rceil\}$.
\end{proof}

\begin{observation}\label{obs_line_dies}
    If $v$ is inactive at iteration $n$, then the number of dead lines
    of $\mathcal L_v$ strictly increases from iteration $n$ to
    iteration $n+1$; specifically, the tangent line $\ell_v$ dies.
\end{observation}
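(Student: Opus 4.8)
The plan is to track the single line $\ell_v$ itself and show that it is alive at iteration $n$ but dead at iteration $n+1$, while no previously dead line can come back to life; since deadness is monotone under the shrinking of the hull, this forces a strict increase in the count.

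First I would fix the geometric picture. Because $v=(x_v,-y_v)$ has negative slope and the peeled region hugs the origin, the dead lines of $\mathcal L_v$ are exactly those lying below $H_n$, and (as already noted) they all lie below the live ones. In particular $\ell_v$, the tangent line parallel to $v$, is the lowest live line at iteration $n$: it is a supporting line of $H_n$, so all of $G_{n-1}\subseteq H_n$ lies on $\ell_v$ or strictly on its upper-right side.

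Next I would invoke the inactivity hypothesis. By definition, if $v$ is inactive then $\ell_v$ meets $H_n$ in a single vertex $p$; since $\ell_v$ supports the convex region $H_n$ and contains no edge, we have $\ell_v\cap H_n=\{p\}$. Consequently $p$ is the unique point of $G_{n-1}$ lying on $\ell_v$ (any grid point of $G_{n-1}$ on $\ell_v$ would lie in $H_n\cap\ell_v=\{p\}$), and every other point of $G_{n-1}$ lies strictly on the upper-right side of $\ell_v$. Moreover $p$ is a vertex of $H_n$, hence $p\in L_n$, so $p$ is removed at this iteration.

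Finally I would assemble the conclusion. After removal, $G_n=G_{n-1}\setminus L_n$ contains no point on $\ell_v$ and lies entirely in the open upper-right half-plane bounded by $\ell_v$; therefore so does $H_{n+1}=\CH(G_n)$, giving $\ell_v\cap H_{n+1}=\emptyset$, i.e.\ $\ell_v$ is dead at iteration $n+1$. On the other hand $G_n\subseteq G_{n-1}$ yields $H_{n+1}\subseteq H_n$, so any line disjoint from $H_n$ is disjoint from $H_{n+1}$, and every line dead at iteration $n$ stays dead. Hence the set of dead lines strictly grows, having gained $\ell_v$. I do not expect a serious obstacle; the only point needing care is the claim that $\ell_v\cap H_n$ is the single point $p$ rather than a segment, which is exactly what ``inactive'' (a single vertex, no edge) guarantees.
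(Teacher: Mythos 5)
Your argument is correct and is exactly the reasoning the paper leaves implicit (the statement appears there as an unproved Observation): inactivity means $\ell_v\cap H_n$ is a single vertex $p\in L_n$, so $p$ is removed, $G_n$ lies strictly on one side of $\ell_v$, hence $\ell_v$ dies, while $H_{n+1}\subseteq H_n$ guarantees previously dead lines stay dead. You also correctly handle the one subtle point — that no non-vertex grid point of $G_{n-1}$ can survive on $\ell_v$, since $\ell_v\cap H_n=\{p\}$.
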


\begin{observation}\label{obs_num_active}
    The number of active vectors at iteration $n$ equals $|L_n|-1$.
\end{observation}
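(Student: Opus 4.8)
The plan is to reduce the statement to a purely combinatorial count of the edges of the boundary chain $B_n$, and then exhibit a bijection between those edges and the active vectors. First I would fix the global picture of $H_n=\CH(G_{n-1})$: since $G_{n-1}\subseteq\N^2$ is unbounded in both coordinate directions, $H_n$ is an unbounded convex region opening toward the upper right, and its boundary $B_n$ splits into three pieces — an infinite vertical ray running up the $y$-axis from the topmost vertex, an infinite horizontal ray running right along the $x$-axis from the rightmost vertex, and between them a bounded convex chain passing through all $|L_n|$ vertices of $L_n$ (the two extreme vertices lying on the two axes). A chain through $|L_n|$ vertices has exactly $|L_n|-1$ bounded edges, so it suffices to show that the active vectors are in bijection with these bounded edges.

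Next I would set up the bijection. By definition $v=(x_v,-y_v)$ with $x_v,y_v>0$ is active iff the tangent line $\ell_v$ to $H_n$ parallel to $v$ contains an edge of $H_n$; since $v$ is primitive and the direction of any edge is primitive, this holds iff $v$ is the primitive direction vector of some bounded edge. Two distinct edges of a convex chain have distinct directions — the edge direction rotates strictly monotonically as one traverses the chain, so no two edges are parallel — hence the map sending an active vector to its edge is injective and surjective onto the set of bounded edges, \emph{provided} every bounded edge genuinely has a direction with both $x_v>0$ and $y_v>0$.

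The main obstacle, and the crux of the argument, is therefore to rule out axis-parallel bounded edges. For this I would prove that $B_n$ has at most one vertex in each column and at most one in each row. The column statement follows from the fact (noted in the paragraph on notation) that within a fixed column the points of $\N^2$ are removed in increasing order of $y$-coordinate: the lowest surviving point of a column is the only candidate to lie on the lower-left chain, while any higher surviving point in the same column lies strictly in the interior direction and cannot be a vertex. The row statement follows by the $x\leftrightarrow y$ reflection symmetry of $\N^2$, under which the peeling process is invariant (equivalently, a negative-slope supporting line at a vertex $(b,h)$ would leave any surviving same-height point $(a,h)$ with $a<b$ strictly below it, contradicting $(a,h)\in H_n$).

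Given this, a vertical bounded edge would place two vertices in one column and a horizontal bounded edge would place two vertices in one row, both impossible; hence every bounded edge has $x_v,y_v>0$, the bijection of the previous step is valid, and the number of active vectors equals the number of bounded edges, namely $|L_n|-1$. The degenerate case $n=1$, where $L_1=\{(0,0)\}$ contributes no bounded edge, matches $|L_1|-1=0$, so the formula holds for all $n$.
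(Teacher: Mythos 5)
Your argument is correct, and it is essentially the intended justification: the paper states this as an Observation with no written proof, the implicit reasoning being exactly that the bounded part of $B_n$ is a strictly convex chain through the $|L_n|$ vertices whose $|L_n|-1$ edges are in bijection with their (pairwise distinct, negative-slope) primitive direction vectors. Your extra step ruling out axis-parallel bounded edges via ``at most one vertex per column/row'' is the right way to make the tacit part rigorous, and the degenerate case $n=1$ is handled correctly.
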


Given $n$, let $m = n^{1/2}/(16c_0)$ for the constant $c_0$ in \autoref{lem_upper_bd_a}. Let $\mathcal V$ be the set of all primitive vectors $(x_v, -y_v)$ with $x_v,y_v>0$ and $x_vy_v \le m$. By applying \autoref{lem_rp} on the rectangles $\{2^{i-1}\sqrt m,\ldots,\allowbreak 2^i\sqrt m\}\times\{0,\ldots,2^{-i}\sqrt m\}$ for $-\log_5 m\le i\le \log_5 m$, we obtain $|\mathcal V| = \Theta(m\log m) = \Theta(n^{1/2}\log n)$.

\begin{lemma}
Up to iteration $n$, each vector of $\mathcal V$ is active at least $n/2$ times.
\end{lemma}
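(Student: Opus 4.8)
The plan is to bound the number of iterations at which $v$ is \emph{inactive}, and then use the fact that at each of the iterations $1,\dots,n$ the vector $v$ is either active or inactive, so that the two counts sum to $n$. Thus it suffices to show that $v$ is inactive at most $n/2$ times. By \autoref{obs_line_dies}, every iteration at which $v$ is inactive causes at least one new line of $\mathcal L_v$ to die; moreover, since the hulls are nested ($H_{k+1}\subseteq H_k$), dead lines never come back to life. Hence the number of inactive iterations among $1,\dots,n$ is at most the total number of lines of $\mathcal L_v$ that are dead at iteration $n+1$, and it remains to bound this number of dead lines by $n/2$.

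To bound the dead lines I would exploit that all dead lines lie below all live lines. For every column index $x\ge 1$, the lowest surviving grid point $(x,a_x(n))$ lies in $G_n\subseteq H_{n+1}=\CH(G_n)$, so the line of $\mathcal L_v$ through it is alive; consequently every dead line passes below the point $(x,a_x(n))$. Applying \autoref{lem_Lv_rect} with this point, the number of dead lines is at most $x\,y_v + a_x(n)\,x_v$. Crucially this holds for \emph{every} $x$ (I never need the actual tangent point, only \emph{some} live point), so I may choose $x$ freely to minimize the bound.

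Substituting the estimate $a_x(n)\le c_0 n^{3/2}/x$ from \autoref{lem_upper_bd_a} bounds the number of dead lines by $x\,y_v + c_0 n^{3/2} x_v/x$ for every integer $x\ge 1$. Optimizing over $x$ (the continuous optimum is at $x\approx \sqrt{c_0}\,n^{3/4}\sqrt{x_v/y_v}$, which lies in the valid range $[\Theta(n^{1/2}),\Theta(n)]$, and one rounds to a nearby integer), the AM--GM inequality collapses this bound to $2\sqrt{c_0}\,n^{3/4}\sqrt{x_v y_v}$. The key point is that the product $x_v y_v$ reappears here, matching exactly the quantity constrained in the definition of $\mathcal V$: using $x_v y_v\le m = n^{1/2}/(16c_0)$ yields
\[
2\sqrt{c_0}\,n^{3/4}\sqrt{x_v y_v}\ \le\ 2\sqrt{c_0}\,n^{3/4}\cdot\frac{n^{1/4}}{4\sqrt{c_0}}\ =\ \frac{n}{2},
\]
so the number of dead lines at iteration $n+1$ is at most $n/2$, whence the inactive count is at most $n/2$ and the active count is at least $n/2$.

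The main obstacle is the geometric step in the second paragraph: finding a point guaranteed to lie in the current hull (so that every dead line passes below it) while simultaneously making the bound of \autoref{lem_Lv_rect} small. Choosing the per-column point $(x,a_x(n))$ and then optimizing over $x$ is precisely what produces the cancellation into $\sqrt{x_v y_v}$; the factor $16$ built into the definition of $m$ is exactly the slack needed to absorb the integer rounding of the optimal $x$ (and the use of iteration $n+1$ rather than $n$) and still reach the clean bound $n/2$.
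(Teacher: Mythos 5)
Your proposal is correct and follows essentially the same route as the paper: bound the inactive iterations by the number of dead lines of $\mathcal L_v$ via \autoref{obs_line_dies}, locate a surviving grid point using \autoref{lem_upper_bd_a}, and apply \autoref{lem_Lv_rect} at the choice $x \approx \sqrt{c_0}\,n^{3/4}\sqrt{x_v/y_v}$ to get the bound $2\sqrt{c_0}\,n^{3/4}\sqrt{x_vy_v} \le n/2$. The paper simply plugs in this optimal $x$ directly rather than phrasing it as an optimization, and leaves the active/inactive dichotomy and the persistence of dead lines implicit, both of which you spell out correctly.
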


\begin{proof}
Consider a vector $v=(x_v, -y_v)\in\mathcal V$. Set $x = c_0^{1/2}n^{3/4}\sqrt{x_v/y_v}$ and $y=1+a_x(n)\le 1+ c_0^{1/2}n^{3/4}\sqrt{y_v/x_v}$ (by \autoref{lem_upper_bd_a}). Since the grid point $(x,y)$ has not been removed by iteration $n$, by \autoref{lem_Lv_rect} the number of dead lines parallel to $v$ is at most
\begin{equation*}
xy_v+yx_v = 2 c_0^{1/2}n^{3/4} \sqrt{x_vy_v} \le 2 c_0^{1/2}n^{3/4} \sqrt{m} = n/2.
\end{equation*}
Hence, the claim follows from \autoref{obs_line_dies}.
\end{proof}

Hence, by \autoref{obs_num_active}:

\begin{corollary}\label{cor_lower_s}
    We have $s(n) = \Omega(n^{3/2}\log n)$.
\end{corollary}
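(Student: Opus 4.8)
The plan is to deduce the corollary from the immediately preceding lemma by a double-counting argument on the incidences between short vectors and iterations. Fix the target value $n$, and take $\mathcal{V} = \mathcal{V}(n)$ to be the set defined just above, with $m = n^{1/2}/(16 c_0)$, so that $\mathcal{V}$ is held fixed while $i$ ranges over $1, \ldots, n$; recall $|\mathcal{V}| = \Theta(n^{1/2}\log n)$. I call a pair $(v,i)$ an \emph{incidence} if $v \in \mathcal{V}$, $1 \le i \le n$, and $v$ is active at iteration $i$, and I estimate the total number of incidences in two ways.

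Summing over vectors first, the preceding lemma guarantees that each $v \in \mathcal{V}$ is active at at least $n/2$ of the iterations $1, \ldots, n$, so the number of incidences is at least $|\mathcal{V}| \cdot n/2 = \Omega(n^{3/2}\log n)$. Summing over iterations first, at each iteration $i$ the number of active vectors belonging to $\mathcal{V}$ is at most the total number of active vectors, which by \autoref{obs_num_active} equals $|L_i| - 1$; hence the number of incidences is at most $\sum_{i=1}^n (|L_i| - 1) = s(n) - n$. Comparing the two counts gives $s(n) - n \ge |\mathcal{V}|\cdot n/2 = \Omega(n^{3/2}\log n)$, and since the additive $n$ is of lower order, $s(n) = \Omega(n^{3/2}\log n)$. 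Combined with the matching upper bound from the earlier corollary, this establishes part (1) of \autoref{theorem_peel_corner}.

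I do not expect a genuine obstacle at this final step: all of the real work has already been absorbed into the supporting lemmas, namely into showing that each short vector remains active for a constant fraction of all iterations (via \autoref{lem_Lv_rect}, \autoref{obs_line_dies}, and the column bound $a_x(n) = O(n^{3/2}/x)$) and into the counting estimate $|\mathcal{V}| = \Theta(n^{1/2}\log n)$. The only points requiring care are bookkeeping: $\mathcal{V}$ must be defined relative to the final value of $n$ and kept fixed across the summation over $i \le n$, and one must note that the active vectors counted by $|L_i| - 1$ form a superset of the active vectors lying in $\mathcal{V}$, so that restricting the incidence count to $\mathcal{V}$ yields a legitimate lower bound rather than an equality.
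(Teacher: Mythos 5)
Your double-counting argument is correct and is precisely the reasoning the paper intends: it derives the corollary from the preceding lemma together with \autoref{obs_num_active} by summing the incidences $(v,i)$ both over $v\in\mathcal V$ and over iterations $i$, the paper merely leaving this step implicit with the word ``Hence.'' No further comment is needed.
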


\begin{lemma}
    There exist constants $c_1,c_2>0$ such that, for every $n$ and
    every $x$ in the range $c_1\sqrt{n}\le x\le c_1 n$, we have
    $a_x(n) \ge c_2 n^{3/2}/x$.
\end{lemma}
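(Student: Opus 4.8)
The plan is to prove the equivalent statement that $\sum_{x\le x'\le 2x}a_{x'}(n)=\Omega(n^{3/2})$ for $x$ in the stated range, and then to read off the pointwise bound: since for fixed $n$ the sequence $a_0(n),a_1(n),\dots$ is nonincreasing, each of the $\Theta(x)$ terms of this sum is at most $a_x(n)$, so $a_x(n)\ge \Omega(n^{3/2})/\Theta(x)=\Omega(n^{3/2}/x)$. Thus everything reduces to lower bounding the number of grid points removed in a single dyadic block of columns around $x$.

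To produce such removed points I would localize the counting behind \autoref{cor_lower_s} by slope. Write $s_v=y_v/x_v$ for $v=(x_v,-y_v)$ and partition $\mathcal V$ into slope classes $s_v\in[2^j,2^{j+1})$. A computation like the one preceding the lemma shows each class has $\Theta(\sqrt n)$ vectors, and the class with $2^j\approx n^{3/2}/x^2$ is the one whose contact edges at iterations $i\approx n$ sit near column $x$ (as the bound below confirms). Every $v$ in this class is active at least $n/2$ times up to iteration $n$, while it can be active at most $0.4n$ times up to iteration $0.4n$, so it is active $\Omega(n)$ times during the window $i\in[0.4n,n]$. Each such active instance is an edge of $B_i$ of direction $v$, whose two endpoints are vertices of $L_i$, i.e.\ removed points. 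Since a removed point is a vertex of exactly one layer, where it is incident to at most two edges, summing over the $\Theta(\sqrt n)$ vectors of the class yields $\Omega(n^{3/2})$ \emph{distinct} removed points, each carrying an incident edge of slope $\approx n^{3/2}/x^2$ created during $[0.4n,n]$.

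Next I would pin down these points using only the upper bound $a_{x'}(i)\le c_0 i^{3/2}/x'$ of \autoref{lem_upper_bd_a}. Let $(X,Y)$ be the contact point of such an edge at iteration $i$, so its supporting line $L(x')=(Y+s_vX)-s_vx'$ has slope $-s_v$. Because $B_i$ lies below the curve $x'\mapsto c_0 i^{3/2}/x'$, that curve lies inside $H_i$ and hence weakly above the supporting line $L$; minimizing the gap $c_0 i^{3/2}/x'-L(x')$ over $x'>0$ gives $Y+s_vX\le 2\sqrt{c_0 s_v}\,i^{3/4}$. With $s_v\approx n^{3/2}/x^2$ and $i\le n$ this forces both $X=O(x)$ and $Y=O(n^{3/2}/x)$; applying the same inequality at the two edge endpoints (which also lie on $L$) confines the whole edge, hence both removed endpoints, to columns $O(x)$ and heights $O(n^{3/2}/x)$. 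So all $\Omega(n^{3/2})$ of these removed points lie in a box $[0,O(x)]\times[0,O(n^{3/2}/x)]$.

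The hard part is the matching \emph{lower} confinement. The argument above places the removed points to the left of $\approx x$, but a priori they could pile up in columns far smaller than $x$; equivalently, the slope-class edges could be long and nearly flat, so that a single supporting line survives for many iterations and the boundary barely recedes. Nothing established so far rules this out, because $a_{x'}(i)\le c_0 i^{3/2}/x'$ bounds $B_i$ only from above, and a tangent line to the bounding hyperbola may extend arbitrarily far while staying below it, giving no control on edge lengths. To finish I would try to show the contact column is also $\Omega(x)$: for instance by running the reflected ($x\leftrightarrow y$) contact bound together with the datum $a_0(n)=n$ to force $B_i$ to be steep near the $y$-axis, so that moderate-slope edges cannot occur at tiny columns, or by an induction on dyadic scale transporting a lower bound from columns $\approx 2x$ inward to columns $\approx x$. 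Confining the $\Omega(n^{3/2})$ removed points to a block $[cx,Cx]$ of bounded ratio would yield $\sum_{cx\le x'\le Cx}a_{x'}(n)=\Omega(n^{3/2})$ and complete the proof; I expect this lower-confinement step, that is, controlling the line lifetimes and local flatness of $B_i$, to be the main technical obstacle.
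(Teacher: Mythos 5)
Your proposal is incomplete, and you have correctly identified where: you cannot confine the $\Omega(n^{3/2})$ removed points to a dyadic block of columns $[cx,Cx]$. That step would indeed be delicate, but the real problem is that your opening reduction --- proving $\sum_{x\le x'\le 2x}a_{x'}(n)=\Omega(n^{3/2})$ --- asks for more than is needed, and the paper never proves anything of that form. Everything you establish up to the gap is essentially the paper's argument: a slope class $\mathcal V'$ of $\Theta(\sqrt n)$ primitive vectors with $y_v/x_v\approx c_0n^{3/2}/x^2$, each active $\Omega(n)$ times up to iteration $n$ (the paper charges each active pair $(v,i)$ to the leftmost endpoint of the contact edge, which is injective and avoids your factor-of-two bookkeeping), and all the charged removed points confined to a corner rectangle $T=[0,O(x)]\times[0,O(n^{3/2}/x)]$ of area $\Theta(n^{3/2})$ --- the paper gets this confinement from \autoref{lem_Lv_rect} applied to a surviving point near the upper hyperbola, which is equivalent to your tangent-line computation. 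So at the point where you stall, you already know that a constant fraction $c'$ of $T$ has been removed by iteration $n$.

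The missing idea is that this fact, combined with the staircase structure of the removed set, already yields the lemma; no lower confinement of the removed points is required. Since points in each column are removed bottom-up and $a_0(n),a_1(n),\ldots$ is nonincreasing, the surviving points of $T=[0,a]\times[0,b]$ form an ``upper-right'' region: if the single point $(c''a,c''b)$ survives, then so does every point of $T$ with $x''\ge c''a$ and $y''\ge c''b$, and these constitute a $(1-c'')^2$-fraction of $T$. Choosing $c''<1-\sqrt{1-c'}$ makes $(1-c'')^2>1-c'$, contradicting the fact that at most a $(1-c')$-fraction of $T$ survives. Hence $(c''a,c''b)$ has been removed, so $a_{c''a}(n)\ge c''b=\Omega(n^{3/2}/x)$, which is exactly the claimed bound at the column $c''a=\Theta(x)$; letting the slope parameter range over $[1/m,m]$ sweeps $x$ through the range $[c_1\sqrt n, c_1 n]$. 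This pigeonhole replaces the entire ``lower-confinement'' program of your last paragraph, and with it your proof closes along the same lines as the paper's.
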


\begin{proof}
    Given a slope $\mu$ in the range $1/m\le \mu\le m$, we will derive
    a lower bound for the distance between the origin and $B_n$ in the
    direction $\mu$. (So for example, taking $\mu=1$ will yield a
    lower bound for $K_n$.)

    Using \autoref{lem_upper_bd_a}, take a grid point $(x,y)$ with
    $y/x\approx \mu$ that has not been removed by iteration
    $n$. Specifically, let $x=\bigl(\sqrt{c_0/\mu}\bigr)n^{3/4}$ and
    $y=\bigl(\sqrt{\mu c_0}\bigr)n^{3/4}$. Define the rectangle
    $T=\{0,\ldots,3x-1\}\times\{0,\ldots,3y-1\}$, so
    $|T|=9c_0n^{3/2}$. We claim that at least a constant fraction of
    the points of $T$ have been removed by iteration $n$.

    Indeed, let $\mathcal V'\subseteq \mathcal V$ be the set of all
    vectors $v=(x_v,-y_v)\in\mathcal V$ with
    $\mu/2 \le y_v/x_v \le 2\mu$. By applying \autoref{lem_rp} on the
    rectangle whose opposite corners are $q/2$ and $q$ for
    $q=(\sqrt{m/\mu}, \sqrt{\mu m})$, we have
    $|\mathcal V'|=\Theta(m)=\Theta(\sqrt n)$.

    Let $v\in\mathcal V'$, and let $i\le n$ be an iteration in which
    $v$ is active. Let $\ell\in\mathcal L_v$ be the line tangent to
    $H_i$. Line $\ell$ passes below point $(x,y)$, so by the
    construction of $T$, all the grid points in $\ell$ belong to
    $T$. Two of these grid points belong to $L_i$. Let us charge the
    pair $(v,i)$ to the leftmost of these two points.

    Doing this over all choices of $v$ and $i$, we make a total of
    $\Theta(n^{3/2})$ charges to points of $T$. Furthermore, each point of $T$
    is charged at most once. Therefore, at least a constant
    fraction (say, a $c'$-fraction) of the points of $T$ are deleted
    by iteration $n$.

    Choose a constant $0<c'' < 1-\sqrt{1-c'}$. Let $x'=3c''x$ and
    $y'=3c''y$ (so $y'/x'=\mu$). We claim that the grid point
    $(x',y')$ has been removed by iteration $n$. Indeed, otherwise,
    all the points behind $(x',y')$ (i.e. all the points
    $(x'',y'')\in T$ with $x''\ge x'$ and $y''>y'$) would also be
    present, and they constitute more than a $(1-c')$-fraction of $T$
    (by the choice of $c''$).

    Rephrasing, given $n$ and given $x'$ in the range
    ${3c''\sqrt{c_0}\sqrt{n}}\le x'\le {3c''\sqrt{c_0}n}$, we have
    \begin{equation*}
        a_{x'}(n) \ge y' = 3c''y=3c''c_0n^{3/2}/x=9(c'')^2c_0n^{3/2}/x'.
    \end{equation*}
\end{proof}

\begin{corollary}
    We have $|L_n| = \Omega(\log n)$.
\end{corollary}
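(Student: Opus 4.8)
The plan is to prove the sharper geometric statement that, among the vertices of one arm of $B_n$ whose $x$-coordinate lies in the window $[c_1\sqrt n,\, c_1 n]$, there are already $\Omega(\log n)$ of them; since $|L_n|$ counts all vertices of $B_n$, this suffices. I would parametrize the lower-left arm of $B_n$ by its $x$-coordinate: traversing it from the $y$-axis side toward the $x$-axis side, the $x$-coordinate strictly increases and the $y$-coordinate strictly decreases, and each vertex $(x,y)$ is the bottommost remaining grid point of its column, so $y=a_x(n)$. By the lower bound just established together with \autoref{lem_upper_bd_a}, every such vertex in the window satisfies $c_2 n^{3/2}/x \le a_x(n) \le c_0 n^{3/2}/x$; that is, the arm is sandwiched in a constant-factor band around the hyperbola $y=C/x$ with $C=n^{3/2}$.

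The key step is to show that a single edge of this convex chain can only span a bounded multiplicative range of $x$. Consider consecutive vertices $P_1=(x_1,y_1)$ and $P_2=(x_2,y_2)$ with $x_1<x_2$ and $x_1$ in the window, and set $\lambda=x_2/x_1$. Evaluate the straight edge $\overline{P_1P_2}$ at the integer $x_m$ nearest the geometric mean $\sqrt{x_1 x_2}$, calling its height $h(x_m)$. On one hand, because the edge lies on the lower boundary $B_n$ while the point $(x_m,a_{x_m}(n))$ is a remaining grid point of $H_n$, we have $h(x_m)\le a_{x_m}(n)\le c_0 C/x_m$ by \autoref{lem_upper_bd_a}. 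On the other hand, since $y_2\ge 0$ and $y_1\ge c_2 C/x_1$, linear interpolation gives $h(x_m)\ge \tfrac12 y_1 \ge \tfrac{c_2}{2}\,C/x_1=\tfrac{c_2}{2}\sqrt\lambda\,C/x_m$ (up to the effect of rounding $x_m$). Combining the two bounds yields $\tfrac{c_2}{2}\sqrt\lambda\le c_0$, so $\lambda\le(2c_0/c_2)^2=:\rho$, a constant. Intuitively this is precisely the statement that a chord of the strictly convex hyperbola rises a factor $\sqrt\lambda$ above the curve at the geometric mean, which the constant-factor band forbids once $\lambda$ is large.

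With each edge multiplying the $x$-coordinate by at most $\rho$, the vertices of the arm are multiplicatively $\rho$-dense: no gap of multiplicative size exceeding $\rho$ can occur inside the window, since such a gap would lie within a single edge. As the window $[c_1\sqrt n,\,c_1 n]$ has multiplicative width $\sqrt n$, it must contain at least $\log_\rho(\sqrt n)=\Omega(\log n)$ vertices, whence $|L_n|=\Omega(\log n)$.

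The main obstacle is the edge-span bound of the second paragraph: it is exactly here that one must invoke both the upper and the lower hyperbola estimates simultaneously and exploit the strict convexity of $y=C/x$. Two minor technical points would be handled routinely: that $x_m$ must be rounded to an integer near $\sqrt{x_1 x_2}$ (costing only constant factors in the interpolation bound), and that one edge may straddle the left endpoint $c_1\sqrt n$ of the window (absorbed into an additive $O(1)$ loss in the vertex count). I note that the averaging estimate coming from $s(n)=\Theta(n^{3/2}\log n)$ does not by itself produce any pointwise lower bound on $|L_n|$, which is why this convexity-based argument is the natural route.
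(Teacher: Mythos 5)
Your proof is correct and takes essentially the same route as the paper's: both arguments use the fact that $B_n$ is sandwiched between two hyperbolas a constant factor apart over a polynomially wide multiplicative range of $x$, so that convexity forces each edge to advance the $x$-coordinate by at most a constant multiplicative factor, yielding $\Omega(\log n)$ edges. Your geometric-mean interpolation step is simply a more explicit rendering of the paper's extremal computation (edges with endpoints on the lower hyperbola, tangent to the upper one) of the constant $k$.
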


\begin{proof}
The idea is that, since $B_n$ is confined between two hyperbolas for a long stretch, it must make at least a certain number of turns. That number is $\Omega(\log n)$, by the following calculation:

For simplicity, let us scale down $B_n$ by a factor of $n^{3/4}$, obtaining $B'_n$. Let $y=c_1/x$ and $y=c_2/x$ be the two bounding hyperbolas, where $c_1<c_2$, and where the lower-hyperbola bound applies up to $x = c_3 n^{1/4}$. The number of edges of $B'_n$ is minimized if each edge starts and ends at the lower hyperbola and is tangent to the upper hyperbola. In such a case, an edge that starts at $x$-coordinate $x_0$ ends at $x$-coordinate $k x_0$, for the constant $k = (\sqrt{c_2} + \sqrt{c_2-c_1})/(\sqrt{c_2}-\sqrt{c_2-c_1})$.

Hence, the number of edges is at least $\log_k n^{1/4} - O(1)$.
\end{proof}

\section{Concluding remarks}

The main open problem is to prove
\autoref{conj:A:C:S:F}. Additionally, if the conjecture can be
confirmed, it would be of interest to generalize the approximation to
the \ACSF that it yields, from convex curves to more general curves.
Also, grid peeling for higher dimensions has not been studied at all,
as far as we know.

\paragraph{Acknowledgements.} The third author would like to thank Franck Assous and Elad Horev for useful conversations.

\bibliographystyle{amsplainurl}
\bibliography{peeling}

\end{document}